\def\gapCVP{\mathrm{CVP}}
\def\gapSVP{\mathrm{SVP}}
\renewcommand{\SVP}{\gapSVP}
\renewcommand{\CVP}{\gapCVP}
\newcommand{\mvp}{\mathsf{mvp}}
\newcommand{\basis}{\mathbf{B}}
\title{On Beating $2^n$ for the Closest Vector Problem}
\author{Amir Abboud\thanks{Weizmann Institute of Science and INSAIT, Sofia University ``St. Kliment Ohridski''. This work is part of the project CONJEXITY that has received funding from the European Research Council (ERC) under the European Union's Horizon Europe research and innovation programme (grant agreement No.~101078482). Supported by an Alon scholarship and a research grant from the Center for New Scientists at the Weizmann Institute of Science. Partially funded by the Ministry of Education and Science of Bulgaria's support for INSAIT, Sofia University ``St. Kliment Ohridski" as part of the Bulgarian National Roadmap for Research Infrastructure.} \\Weizmann Institute of Science \\amir.abboud@weizmann.ac.il \and Rajendra Kumar\thanks{Indian Institute of Technology Delhi. Part of this work was done while at Weizmann Institute of Science. Supported by Chandruka New Faculty Fellowship at IIT Delhi.} \\Indian Institute of Technology Delhi \\rajendra@cse.iitd.ac.in}
\date{\today}
\begin{document}

\maketitle
The Closest Vector Problem (CVP) is a computational problem in lattices that is central to modern cryptography.
The study of its fine-grained complexity has gained momentum in the last few years, partly due to the upcoming deployment of lattice-based cryptosystems in practice.
A main motivating question has been if there is a $(2-\varepsilon)^n$ time algorithm on lattices of rank $n$, or whether it can be ruled out by SETH.

Previous work came tantalizingly close to a negative answer by showing a $2^{(1-o(1))n}$ lower bound under SETH if the underlying distance metric is changed from the standard $\ell_2$ norm to other $\ell_p$ norms (specifically, any norm where $p$ is not an even integer).
Moreover, barriers toward proving such results for $\ell_2$ (and any even $p$) were established.

In this paper we show \emph{positive results} for a natural special case of the problem that has hitherto seemed just as hard, namely $(0,1)$-$\CVP$ where the lattice vectors are restricted to be sums of subsets of basis vectors (meaning that all coefficients are $0$ or $1$).
All previous hardness results applied to this problem, and none of the previous algorithmic techniques could benefit from it. 
We prove the following results, which follow from new reductions from $(0,1)$-$\CVP$ to weighted Max-SAT and minimum-weight $k$-Clique.

\begin{itemize}
\item An $O(1.7299^n)$ time algorithm for exact $(0,1)$-$\CVP_2$ in Euclidean norm, breaking the natural $2^n$ barrier, as long as the absolute value of all coordinates in the input vectors is $2^{o(n)}$.

\item A computational equivalence between $(0,1)$-$\CVP_p$ and Max-$p$-SAT for all even $p$ (a reduction from Max-$p$-SAT to $(0,1)$-$\CVP_p$ was previously known).

\item The minimum-weight-$k$-Clique conjecture from fine-grained complexity and its numerous consequences (which include the APSP conjecture) can now be supported by the hardness of a lattice problem, namely $(0,1)$-$\CVP_2$.

\end{itemize}

Similar results also hold for the Shortest Vector Problem.

\newpage

\maketitle

\section{Introduction}

\def \eps {\varepsilon}

A lattice $\lat$ of rank $n$ is the set of all integer linear combinations of a set of $n$ linearly independent \emph{basis} vectors $\basis=(\vec{b}_1,\vec{b}_2,\ldots,\vec{b}_n)\in \Q^{m\times n}$, \textit{i.e.}
\[\lat=\lat(\basis):=\left\{\sum_{i=1}^n z[i]\vec{b}_i\;|\; z[i]\in \Z\right\}.\]

The two most important computational problems on lattices are the Shortest Vector Problem (SVP) and the Closest Vector Problem ($\CVP$).
Given a basis $\basis$, $\SVP$ asks to find a shortest non-zero vector in the lattice $\lat(\basis)$, while in $\CVP$, we are also given a target vector $\vec{t}$ and want to find a closest vector in $\lat(\basis)$ to $\vec{t}$. For any approximation factor $\gamma\geq 1$, $\gamma$-$\SVP$ asks to find a non zero lattice vector whose length is atmost $\gamma$ times the length of shortest non-zero lattice vector. Similarly in $\gamma$-$\CVP$ we need to find a lattice vector whose distance from the target is at most $\gamma$ times the minimum distance between the lattice and the target vector.

Starting from the celebrated \textsf{LLL} algorithm by Lenstra, Lenstra, and Lov{\'a}sz in 1982~\cite{LLL82}, these problems have found various applications in algorithmic number theory~\cite{LLL82}, convex optimization~\cite{Kannan87,FrankT87}, cryptanalytic tools~\cite{Shamir84,Brickell84,LagariasO85}, and most importantly in modern cryptography where the security of many cryptosystems~\cite{Ajtai96,MR04,regevLatticesLearningErrors2009,Regev06,MR08,Gentry09,BV14} is based on their hardness.

The first reason cryptographers are excited about these problems is that they may be the key to the holy grail of basing security on NP-hard problems.
The motivation for this was in Ajtai's works that (1) proved the NP-hardness of $\SVP$ \cite{ajtaiShortestVectorProblem1998} (it was already known for $\CVP$ \cite{vEB81,arora1997hardness}),
and (2) designed a cryptographic hash function that is secure assuming $\SVP$ is hard to approximate up to a $\poly(n)$ factor \cite{Ajtai96}. 
Many follow-up works have tried to reduce the gap between the approximation factors that are provably NP-hard \cite{caiApproximatingSVPFactor1999,Mic01svp,khot2006hardness,khotHardnessApproximatingShortest2005, havivTensorbasedHardnessShortest2012,micciancioInapproximabilityShortestVector2012} and those on which crypto can be based \cite{MR04,regevLatticesLearningErrors2009,peikertPublickeyCryptosystemsWorstcase2009,BLPRS13}.
However, there is still a gap with certain formal barriers against closing it \cite{GG00,aharonovLatticeProblemsNP2005}.

The second reason for excitement is that these problems are believed to be hard for \emph{quantum} algorithms as well, and so lattice-based cryptosystems such as Regev's \cite{regevLatticesLearningErrors2009} are suitable for post-quantum cryptography. See \cite{peikertDecadeLatticeCryptography2016} for a survey.
Indeed, such a scheme~\cite{ABD+CRYSTALSKyberVersion022021,NIST2022} will soon replace the currently used number-theoretic schemes that are known to be breakable if a large quantum computer is built~\cite{shor94}.
In practice, where there is a trade-off between security and efficiency, system designers assume the hardness of approximate $\SVP$ or $\CVP$ in a precise, \emph{fine-grained} sense and work with the smallest possible instance sizes that are intractable.
Thus, even a ``mild'' improvement from $2^n$ to $2^{n/10}$ could break systems currently believed to be secure.

Many papers across the last 25 years aim to improve the base of the exponent for $\CVP$ and $\SVP$. There is an efficient reduction from $\SVP$ to $\CVP$ that preserves the rank (and the approximation factor) \cite{GMSS99} and so algorithms for $\CVP$ transfer to $\SVP$, but the other direction is not known.
The first exact algorithm was designed by Kannan~\cite{kannan1987algorithmic} and it had $n^{O(n)}$ time complexity for both problems. 
 Ajtai, Kumar and Sivakumar introduced a randomized sieving technique and proposed a $2^{O(n)}$ time algorithm for $\SVP$~\cite{AKS01}. They further extended this result to an approximation of $\CVP$~\cite{AKS02}. A sequence of works focused on improving the time complexity of this algorithm by optimizing the constant in the exponent~\cite{NV08,PS09, MV10, LWXZShortestLattice11}. Currently, the  fastest algorithm for $\SVP$ runs in $2^{n+o(n)}$ time~\cite{ADRS15,AS18}. (Furthermore, Aggarwal, Chen, Kumar, and Shen~\cite{ACKS20} recently demonstrated a quantum improvement for $\SVP$ with $1.784^{n}$ time.) 
For $\CVP$, Micciancio and Voulgaris~\cite{MV13} gave a deterministic algorithm that runs in $4^{n+o(n)}$ time. Aggarwal, Dadush and Stephens-Davidowitz~\cite{ADS15} gave the current fastest algorithm for $\CVP$, which has a time complexity of $2^{n+o(n)}$, matching the bound of $\SVP$. 
This remains the state of the art if we allow a $(1+\eps)$-approximation, but it can be improved if we allow larger factors  \cite{LWXZShortestLattice11,WLWFindingShortest15,EV20,ALS21}.

The search for a matching fine-grained $2^{(1-o(1))n}$ lower bound under popular assumptions such as SETH\footnote{The Strong Exponential Time Hypothesis (SETH) states that there is no $\eps>0$ such that for all $k\geq 3$ the $k$-SAT problem can be solved in $O((2-\eps)^n)$ time.} was kick-started by Bennet, Golovnev, and Stephens-Davidowitz \cite{BGS17}.
The authors were able to show a $2^{\Omega(n)}$ lower bound (under ETH) and a higher but seemingly sub-optimal lower bound of $2^{\omega n/3} \leq \Omega(1.17298^n)$ assuming that the current fastest algorithm of MAX-2-SAT is optimal. 
The original results were for $\CVP$ but later works extended them to $\SVP$ and other lattice problems as well~\cite{AS18b,BP20,AC21,ABGS19,BPT22} but with weaker bounds; e.g. there is currently no $1.0001^n$ lower bound for $\SVP$. 
Remarkably, the desired $2^{(1-o(1))n}$ lower bound under SETH was successfully accomplished for $\CVP$ \emph{if} we change the norm from Euclidean to $\ell_p$ where $p$ is anything but an even integer \cite{BGS17, ABGS19}; i.e. for the $\CVP_p$ problems where $p\not\in 2\Z_{>0}$.
Similar but weaker results hold for $\SVP_p, p\not\in 2\Z_{>0}$ as well \cite{AS18b}.
However, the Euclidean case (which is widely acknowledged to be by far the most popular) has remained tantalizingly open.

\begin{openproblem}
Can $\CVP$ and $\SVP$ (under the $\ell_2$ norm) be solved in $(2-\eps)^n$ time?
\end{openproblem}

The Euclidean case has been easier than other norms for the existing techniques. It is still unknown whether $2^{n+o(n)}$ time can be achieved for any other $\ell_p$ norm with $p\neq 2$:
the fastest algorithm in the exact case still requires $(\log n)^{\Omega(n)}$ time~\cite{reis2023subspace}, and faster constant-factor approximation algorithms are known if the ambient dimension $m$ is small enough~\cite{BN09,EV20}.
This could be because the other norms are more complicated and fewer people have thought about them, but it could also be that the Euclidean case is easier and that the $2^n$ lower bound for other norms is too pessimistic.

Interesting barrier results have been shown against the possibility of basing a $2^n$ lower bound for \emph{even norms} (i.e. $\ell_p$ with $p \in 2\Z_{>0}$) on SETH.
First, Aggarwal, Bennett, Golovnev, and Stephens-Davidowitz~\cite{ABGS19} showed that ``natural" reductions cannot show a lower bound for $\CVP$ higher than $2^{3n/4}$ under SETH.\footnote{A reduction is said to be natural if there is a bijective mapping between the set of satisfying assignments, and the set of closest vectors in the lattice.}
More recently, Aggrawal and Kumar showed that any $2^{\eps n}$ lower bound from SETH that is proved via a Turing reduction would collapse the polynomial hierarchy \cite{AK23}.
These results point out a technical difference between even and odd norms but it was unclear whether this difference could make the problems truly easier.

\subsection{Our Results}

In this paper we present new algorithms suggesting that, in the Euclidean norm, $\CVP$ and $\SVP$ are easier than previously thought, or rather that they are genuinely easier under assumptions that were hitherto considered mild. 

Specifically, our results concern the $(0,1)$-$\CVP$ and $(0,1)$-$\SVP$ variants where the solution must be a linear combination of the given basis vectors where each coefficient $z[i]$ is $0$ or $1$. 
This is not only a natural problem (reminiscent of Subset-Sum since each lattice vector is defined by a subset sum of basis vectors) but it is also the problem directly considered in all existing hardness results for the general problems. That is, the  known complexity theoretic results for $\CVP$ (or $\SVP$) are in fact hardness results for the special case of $(0,1)$-$\CVP$ (or $(0,1)$-$\SVP$).
Moreover, in the (non-fine-grained) poly-time regime this restriction is equivalent to the general case: There is a reduction from $\CVP$ on rank $n$ lattices to $(0,1)$-$\CVP$ on rank $n^3$ lattices.\footnote{This follows implicitly from \cite{AK23}. We can transform the basis and the target vector such that the coefficients of the closest lattice vector are bounded by $2^{n^2}$. Then we can construct the lattice with basis vectors $\forall i\in [n], j\in [n^2]$ $[(D\cdot 2^{j}\vec{b}_i)^T (\vec{e}_i)^T]^T$ where $\vec{e}_i$ is a vector where the $i^{\text{th}}$ coordinate is $1$ and the rest are zero. It is easy to argue that for a sufficiently large integer $D$ we get an almost approximation factor preserving reduction from $\CVP$ on rank $n$ lattices to $\{0,1\}$-$\CVP$ on rank $n^3$ lattices.} However, their fine-grained complexity could be different; in particular, it is easy to get a $2^n$ upper bound for $(0,1)$-$\CVP_p$ for all $p$ whereas it is open for $\CVP_p$.
To our knowledge, no existing algorithmic techniques could improve the state of the art in the Euclidean case, under the $(0,1)$ restriction.

\begin{definition}[$(0,1)$-$\CVP$]
\label{def:01CVP}
 For any $p\in [1,\infty]$,  $(0,1)$-$\CVP_p$ is defined as follows: Given a basis $\basis\in \Z^{m\times n}$ of lattice $\lat$,\footnote{Any lattice $\lat \subset Q^m$ can be scaled by sufficiently large integer $D$ to make it $D\lat \subseteq \Z^m$
} a target vector $\vec{t}\in \Z^m$, and a number $d>0$, the goal is to distinguish between:
\begin{itemize}
\item \textsf{YES} instances where $\exists \vec{z}\in \{0,1\}^n$ for which $\|\basis \vec{z}-\vec{t}\|_p\leq d$, and
\item \textsf{NO} instances where  
$\forall \vec{z}\in \{0,1\}^n$ the distance $\|\basis \vec{z}-\vec{t}\|_p >  d$ is large.\footnote{In the literature it is more common to define it such that in the \textsf{NO} case for all lattice vectors  the distance from target is more than $d$ i.e., $\forall \vec{z}\in \Z^n$, $\|\basis \vec{z}-\vec{t}\|_p>d$. Both these problems can be trivially reduced to each other  by a Karp reduction.}
\end{itemize}
\end{definition}

The $(0,1)$-$\SVP$ problem is defined analogously (\cref{def:01SVP}) and can also be reduced to $(0,1)$-$\CVP$ by a slight modification in the general case reduction~\cite{GMSS99}.
Note that when the $p$ subscript is omitted we are in the Euclidean case of $p=2$.

\paragraph{Main Result}
Our main result is an algorithm breaking the natural $2^n$ bound for the $(0,1)$ version of SVP and CVP, assuming that the coefficients of the basis vectors are not extremely large.
Ultimately, the algorithm is obtained by a reduction to the problem of detecting a triangle in a graph, and then exploiting fast matrix multiplication. Thus, our bounds depend on the exponent $\omega < 2.371866$ \cite{duan2022faster}.

\begin{theorem}
\label{thm:main}
There is an \emph{exact} algorithm for $(0,1)$-$\CVP$ and for $(0,1)$-$\SVP$ that runs in time $ 2^{\omega n/3 +o(n)} \leq  \tilde{\mathcal{O}}((1.7299)^n)$ if the coordinates of the basis and target vectors are bounded by $2^{o(n)}$.
\end{theorem}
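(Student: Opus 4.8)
The key observation is that for $\vec z\in\{0,1\}^n$ the squared Euclidean distance is a quadratic pseudo-Boolean function of the coordinates. Writing $Q:=\basis^{T}\basis\in\Z^{n\times n}$ and $\ell:=\basis^{T}\vec t$ and using $z_i^2=z_i$,
\[
\|\basis\vec z-\vec t\|_2^2 \;=\; \vec z^{T}Q\vec z-2\ell^{T}\vec z+\|\vec t\|_2^2 \;=\; 2\!\!\sum_{i<i'}Q_{ii'}\,z_iz_{i'}\;+\;\sum_i (Q_{ii}-2\ell_i)\,z_i\;+\;\|\vec t\|_2^2 ,
\]
and all coefficients are integers. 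Under the hypothesis that the entries of $\basis$ and $\vec t$ are $2^{o(n)}$ (and hence, assuming as usual that $m$ is not super-polynomial, that the entries of $Q$, $\ell$ and the constant $\|\vec t\|_2^2$ are $2^{o(n)}$), both these coefficients and the value of the objective on any $\vec z$ are integers of absolute value at most $M=2^{o(n)}$. So $(0,1)$-$\CVP$ with threshold $d$ is exactly asking whether this quadratic form attains a value $\le d^2$, and binary search over $d$ recovers the exact minimum distance.

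\textbf{Reduction to minimum-weight triangle.} Partition $[n]$ into three blocks $I_1,I_2,I_3$ of size $\lceil n/3\rceil$, and enumerate the at most $2^{n/3+O(1)}$ partial assignments of each block, obtaining three vertex sets $V_1,V_2,V_3$ of size $N\le 2^{n/3+O(1)}$. For $x\in V_1$, $y\in V_2$, $z\in V_3$ the objective splits into three intra-block terms (each depending on one of $x,y,z$), three cross-block terms $w_{12}(x,y),w_{23}(y,z),w_{13}(x,z)$ collecting the monomials $z_iz_{i'}$ with $i,i'$ in different blocks, and the constant $\|\vec t\|_2^2$. Folding the intra-block terms and the constant into incident cross-block terms yields edge weights $a(x,y),b(y,z),c(x,z)$, integers that can be shifted into $\{0,\dots,O(M)\}$, such that the total weight of every triangle $(x,y,z)$ equals $\|\basis\vec z-\vec t\|_2^2$ (plus a fixed shift) for the corresponding $\vec z$. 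Hence $(0,1)$-$\CVP$ reduces to deciding whether this tripartite graph has a triangle of weight at most $d^2$ (plus the shift), and, more generally, to computing the minimum triangle weight.

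\textbf{Solving minimum-weight triangle via fast matrix multiplication.} This is the standard matrix-multiplication approach underlying the MAX-2-SAT algorithm: for each weight value $i$ form the $N\times N$ boolean matrix $A^{(i)}$ with $A^{(i)}_{x,y}=1$ iff $a(x,y)=i$, and likewise $B^{(j)}$ on $V_2\times V_3$ and $C^{(k)}$ on $V_3\times V_1$; a triangle with these three edge weights exists iff $\mathrm{tr}\big(A^{(i)}B^{(j)}C^{(k)}\big)>0$. Iterating over the $O(M^3)$ value triples and multiplying $N\times N$ matrices in $N^{\omega+o(1)}$ time finds the minimum triangle weight in time $O\!\big(M^3\cdot N^{\omega+o(1)}+M N^2\big)=2^{\omega n/3+o(n)}$, since $M=2^{o(n)}$. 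This gives the claimed bound for $(0,1)$-$\CVP$. For $(0,1)$-$\SVP$ the only extra point is to exclude $\vec z=\vec 0$; this is done by solving $n$ such instances, the $i^\ast$-th one additionally fixing $z_{i^\ast}=1$ and $z_i=0$ for $i<i^\ast$ (equivalently a $(0,1)$-$\CVP$ instance on the remaining $n-1$ basis vectors with target $-\vec b_{i^\ast}$), and taking the best, at the cost of a factor $n$.

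\textbf{Main obstacle.} The quadratic-form reformulation and the matrix-multiplication step are each conceptually routine; the genuine constraint — and the reason for the hypothesis in the statement — is that the edge-weight universe must have size $2^{o(n)}$ for the $O(M^3)$ enumeration of weight triples to be affordable. For $\ell_2$ this holds because $\|\basis\vec z-\vec t\|_2^2$ is a bounded integer, so the main things to verify carefully are: (i) that collapsing the intra-block terms and the constant into edge weights keeps everything integral and of magnitude $O(M)$; (ii) that the $\vec z=\vec 0$ exclusion for $\SVP$ composes cleanly with the block partition; and (iii) that the ambient dimension $m$ is controlled — if it is not already polynomial it must be absorbed into the ``$2^{o(n)}$'' hypothesis so that the Gram entries stay small.
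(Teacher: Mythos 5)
Your proof is correct and follows essentially the same route as the paper: split-and-list the basis into three blocks, encode the pairwise expansion of the squared Euclidean distance as edge weights of a tripartite graph on $3\cdot 2^{n/3}$ vertices, find a minimum-weight triangle via fast matrix multiplication, and handle $(0,1)$-$\SVP$ by $n$ calls to $(0,1)$-$\CVP$. The only immaterial differences are that you phrase the decomposition via the Gram matrix and fold the unary/constant terms into edges asymmetrically to keep the weights integral (the paper spreads them symmetrically with fractional coefficients $\tfrac{1}{k-1}$ and $\binom{k}{2}^{-1}$), and you enumerate the $O(M^3)$ weight triples directly rather than citing Williams' $\tilde{\mathcal{O}}(WN^{\omega})$ minimum-weight-triangle algorithm; both choices yield the same $2^{\omega n/3+o(n)}$ bound.
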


Before our work, the only algorithms beating the natural $2^n$ bound for $\CVP$ (even under the $(0,1)$-restriction) were either a large constant factor approximation in $1.7435^n$ time \cite{LWXZShortestLattice11,WLWFindingShortest15,EV20} or a $2^{n/2}$ time $\sqrt{n}$-approximation algorithm \cite{ALS21}. Our algorithm is thus the fastest for any approximation factor below $\sqrt{n}$. 
Notably, this establishes a \emph{separation} between the Euclidean and the odd norms because such a bound for the odd norms (even if only for the $(0,1)$ with coefficients in $2^{o(n)}$) refutes SETH.

\paragraph{Equivalence with Max-$p$-SAT}
The upper bounds we achieve for $(0,1)$-$\CVP$ are similar to the state-of-the-art for the \emph{weighted Max-$2$-SAT} problem: given a $2$-CNF formula in which every clause has a weight, find an assignment that maximizes the total weight of satisfied clauses.
(Note that this is essentially the Max-Cut problem.)
Moreover, the technique for beating the natural $2^n$ bound due to Williams \cite{williams2005new} is also similar, namely by reduction to triangle detection.

Indeed, a reduction from Weighted-\textsf{Max}-$p$-$\SAT$ to $(0,1)$-$\CVP_p$ (\cref{thm:sat-to-cvp}) was shown by Bennet, Golovnev, and Stephens-Davidowitz \cite{BGS17} to prove the hardness of $(0,1)$-$\CVP$.
Our second result is a reduction in the reverse direction, establishing a fine-grained \emph{equivalence}.

\begin{restatable}{theorem}{ThmCVPtoSAT}\label{thm:cvp-to-sat}
    For any $p\in 2\Z_{>0}$, there exists a poly-time many-one (Karp) reduction from $(0,1)$-$\CVP_p$  on lattices of rank $n$ to Weighted-\textsf{Max}-$p$-$\SAT$ on $n$ variables.  
\end{restatable}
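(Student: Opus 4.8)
The plan is to expand the $p$-th power of the objective into a multilinear polynomial of degree at most $p$ in the $0/1$ variables, and then convert that polynomial, coefficient by coefficient, into a weighted $p$-CNF formula on the same $n$ variables. Concretely, for $\vec z\in\{0,1\}^n$ write
\[
  f(\vec z)\;:=\;\length{\basis\vec z-\vec t}_p^{p}\;=\;\sum_{j=1}^{m}\Bigl(\sum_{i=1}^{n}B_{ji}\,z[i]-t_j\Bigr)^{p}.
\]
Since $p$ is a fixed even positive integer, each summand is a polynomial of total degree $p$ in $z[1],\dots,z[n]$; using $z[i]^{k}=z[i]$ on $\{0,1\}$ for all $k\ge1$, we reduce $f$ to a multilinear polynomial $f(\vec z)=\sum_{S\subseteq[n],\,|S|\le p}c_S\prod_{i\in S}z[i]$ with integer coefficients $c_S$. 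For fixed $p$ there are only $O(n^{p})$ such coefficients, each of polynomial bit length, so they are computable in polynomial time, and the input $(0,1)$-$\CVP_p$ instance is a \textsf{YES} instance exactly when $\min_{\vec z\in\{0,1\}^n}f(\vec z)\le d^{p}$.

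The core step is a peeling lemma: any multilinear $h(\vec z)=\sum_{|S|\le p}c_S\prod_{i\in S}z[i]$ equals, on $\{0,1\}^n$, $h(\vec 0)+\sum_{r}w_r\,\mathbf{1}[\,C_r\text{ is unsatisfied by }\vec z\,]$ for some positive integer weights $w_r$ and clauses $C_r$ of width at most $p$. I would prove this by processing the sets $S$ in order of decreasing $|S|$ while maintaining a residual polynomial, starting from $h$. If the current residual coefficient $\gamma_S$ of a set $S$ with $|S|\ge1$ is positive, add the all-negated clause $\bigvee_{i\in S}\overline{z[i]}$ with weight $\gamma_S$: its unsatisfied-indicator is precisely $\prod_{i\in S}z[i]$, so this cancels $\gamma_S$ and alters nothing else. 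If $\gamma_S<0$, pick any $a\in S$ and add the clause $\bigl(\bigvee_{i\in S\setminus\{a\}}\overline{z[i]}\bigr)\vee z[a]$ with weight $|\gamma_S|$: its unsatisfied-indicator is $\prod_{i\in S\setminus\{a\}}z[i]-\prod_{i\in S}z[i]$, which again cancels $\gamma_S$ while modifying only the coefficient of the strictly smaller, not-yet-processed set $S\setminus\{a\}$. Every clause produced is satisfied by $\vec z=\vec 0$, and each step only disturbs strictly smaller sets, so after all sets of size $\ge1$ have been processed the residual is the constant $h(\vec 0)$; there are $O(n^{p})$ steps and all arithmetic stays polynomial in the input size for fixed $p$.

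To assemble the reduction, apply the lemma with $h=f$, noting that $f(\vec 0)=\sum_j(-t_j)^{p}=\length{\vec t}_p^{p}$, to get $f(\vec z)=\length{\vec t}_p^{p}+\sum_r w_r\,\mathbf{1}[C_r\text{ unsatisfied}]$ on $\{0,1\}^n$. Output the weighted $p$-CNF with clause set $\{C_r\}$ and weights $\{w_r\}$ over the variables $z[1],\dots,z[n]$, together with the threshold $\theta:=W+\length{\vec t}_p^{p}-\lfloor d^{p}\rfloor$, where $W=\sum_r w_r$ (here $d$ is rational, so $\lfloor d^{p}\rfloor$ is easily computed). For any assignment $\vec z$ the total weight of satisfied clauses is $W-\sum_r w_r\mathbf{1}[C_r\text{ unsatisfied}]=W+\length{\vec t}_p^{p}-f(\vec z)$, so its maximum over all $\vec z$ equals $W+\length{\vec t}_p^{p}-\min_{\vec z}f(\vec z)$; since $\min_{\vec z}f(\vec z)$ is an integer, $\min_{\vec z}f(\vec z)\le d^{p}$ if and only if this maximum is at least $\theta$, which is exactly the required many-one reduction, and the number of variables is unchanged.

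I expect the main obstacle to be precisely the sign handling in the peeling lemma: reading off one positively weighted clause per monomial works only for the nonnegative coefficients, and the point is that a negative top-degree coefficient can be absorbed by a single clause at the cost of a correction to a strictly lower-degree coefficient, which the top-down sweep then cleans up. A secondary point worth flagging is that $p$ being a fixed constant (rather than part of the input) is what keeps the number of clauses ($O(n^{p})$) and the weights of polynomial bit length inside a polynomial-time reduction.
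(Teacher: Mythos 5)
Your reduction is correct in substance and reaches the same destination as the paper's, but via a genuinely different clause gadget. Both proofs rest on the same algebraic fact: for even $p$, $\|\basis\vec z-\vec t\|_p^p$ is a multilinear polynomial of degree at most $p$ in the $0/1$ coordinates of $\vec z$ (the paper phrases this via $\mvp$ and \cref{cor:simple-mvp}, you via direct multinomial expansion). Where you diverge is in how a signed monomial $c_S\prod_{i\in S}z[i]$ becomes positively weighted clauses. The paper attaches to each tuple of indices a block of $2^k$ clauses (all sign patterns over the $k$ distinct variables), of which every assignment satisfies exactly $2^k-1$; the weights are shifted by a large constant $D$ so that the satisfied weight in the block is $D$ minus the signed $\mvp$ term exactly when all variables are $1$. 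Your peeling lemma instead emits a single clause per monomial and handles a negative coefficient by converting the clause's unsatisfied-indicator $\prod_{S\setminus\{a\}}z[i]-\prod_{S}z[i]$ into a correction on a strictly smaller set, swept up later. This buys fewer clauses and no dependence on a large additive constant $D$ (which in the paper must dominate $\max|\mvp|$ to keep weights positive), at the cost of a top-down bookkeeping argument. One small slip to repair: your claim that every produced clause is satisfied by $\vec z=\vec 0$ fails when $|S|=1$ and $\gamma_S<0$, since the emitted clause is then the unit positive clause $z[a]$; consequently the final residual constant is not $h(\vec 0)=\|\vec t\|_p^p$ in general but $h(\vec 0)$ plus the (negative) singleton corrections. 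This does not harm the reduction --- the peeling procedure computes the true constant $c$ as it runs, and the threshold should simply be $\theta=W+c-\lfloor d^p\rfloor$ rather than $W+\|\vec t\|_p^p-\lfloor d^p\rfloor$ --- but as stated the explicit threshold formula is wrong. With that one-line fix, the argument is complete, and your closing remark is right that constancy of $p$ is what keeps the $O(n^p)$ clause count and coefficient sizes polynomial (the paper's reduction has the same $O(n^p\cdot\poly(n,m,p))$ cost).
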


\begin{corollary}
\label{cor:equiv}
For any $p\in 2\Z_{>0}$ and $T(n)$, the $(0,1)$-$\CVP_p$ problem can be solved in $O(T(n))+ n^{O(1)}$ time \emph{if and only if} Weighted-\textsf{Max}-$p$-$\SAT$ can. 
\end{corollary}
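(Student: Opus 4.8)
The plan is to express $\|\basis\vec{z}-\vec{t}\|_p^p$ as a bounded-degree multilinear polynomial in the coordinates $z[1],\dots,z[n]$, and then convert that polynomial into a weighted $p$-CNF by the classical posiform trick, keeping the Boolean variables equal to $z[1],\dots,z[n]$ so that their number stays $n$.

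Write $B_{j1},\dots,B_{jn}$ for the entries of the $j$-th row of $\basis$. Since $p$ is an even integer, $\|\basis\vec{z}-\vec{t}\|_p^p=\sum_{j=1}^{m}\big(\textstyle\sum_{i=1}^{n}B_{ji}z[i]-t_j\big)^p$ is a sum of $p$-th powers of affine forms, and this is the only place evenness is used. Expanding each power by the multinomial theorem and reducing $z[i]^k\mapsto z[i]$ (valid since $z[i]\in\{0,1\}$) yields a multilinear polynomial $f(\vec{z})=\sum_{S\subseteq[n],\,|S|\le p}c_S\prod_{i\in S}z[i]$ with integer coefficients, for which $\|\basis\vec{z}-\vec{t}\|_p\le d\iff f(\vec{z})\le d^p$ on all of $\{0,1\}^n$, and $f(\vec{z})$ is always a nonnegative integer. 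I would check polynomial-time computability for constant $p$: one $p$-th power contributes at most $\binom{n+p}{p}=\poly(n)$ monomials, and each $c_S$ is a sum of $m$ multinomial coefficients times products of at most $p$ input entries, hence of polynomial bit-length.

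Next I would rewrite $f$ in the form $f(\vec{z})=K-\sum_{t}a_t\,\mathbf{1}[C_t\text{ is satisfied by }\vec{z}]$ with $K\in\Z$, all $a_t>0$, and all $C_t$ clauses of width at most $p$ over $z[1],\dots,z[n]$. A monomial with $c_S>0$ is handled by $\prod_{i\in S}z[i]=1-\mathbf{1}[D_S\text{ sat}]$, where $D_S:=\bigvee_{i\in S}\bar z[i]$ has width $|S|\le p$, so $c_S\prod_{i\in S}z[i]=c_S-c_S\,\mathbf{1}[D_S\text{ sat}]$. A monomial with $c_S<0$, writing $S=\{i_1,\dots,i_k\}$ in some order, is handled by the telescoping identity $\prod_{r=1}^{k}z[i_r]=1-\sum_{r=1}^{k}\big(\prod_{q<r}z[i_q]\big)\bar z[i_r]$ together with $\big(\prod_{q<r}z[i_q]\big)\bar z[i_r]=1-\mathbf{1}[C_r\text{ sat}]$ for the width-$r$ clause $C_r:=\big(\bigvee_{q<r}\bar z[i_q]\big)\vee z[i_r]$; this expresses $c_S\prod_{i\in S}z[i]$ as a constant minus $\sum_{r=1}^{k}|c_S|\,\mathbf{1}[C_r\text{ sat}]$, and every $C_r$ has width $r\le k\le p$. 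Folding $c_\emptyset$ and all the produced constants into $K$ and collecting the clauses gives the claimed identity with at most $p$ times the number of monomials of $f$, hence $\poly(n)$, clauses; I would include the two short identities above rather than cite the pseudo-Boolean literature.

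Finally, the reduction outputs the weighted Max-$p$-SAT instance $\Phi:=\{(C_t,a_t)\}_t$ on variables $z[1],\dots,z[n]$ with weight threshold $W_0:=K-\lfloor d^p\rfloor$. Since $\min_{\vec{z}}f(\vec{z})=K-\max_{\vec{z}}\sum_{t}a_t\,\mathbf{1}[C_t\text{ sat by }\vec{z}]$ and $f$ is integer-valued, there is a $\vec{z}\in\{0,1\}^n$ with $\|\basis\vec{z}-\vec{t}\|_p\le d$ if and only if some assignment satisfies clauses of $\Phi$ of total weight at least $W_0$ — exactly the question $\Phi$ poses — so the map is a polynomial-time many-one reduction preserving the number of variables, which is \cref{thm:cvp-to-sat} and, with \cref{thm:sat-to-cvp}, yields \cref{cor:equiv}. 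I do not expect a genuine obstacle; the care is all in the bookkeeping: that the expansion and its coefficients stay polynomial-size for constant $p$, that telescoping a degree-$k$ monomial never creates a clause of width more than $k$, and that using $\lfloor d^p\rfloor$ (when $d^p\notin\Z$) keeps the equivalence exact.
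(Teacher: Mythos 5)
Your proposal is correct: the corollary is exactly the composition of a rank/variable-preserving reduction in each direction, and you supply a valid proof of the forward one (\cref{thm:cvp-to-sat}) before invoking \cref{thm:sat-to-cvp}, which is the same top-level structure as the paper's. The difference lies entirely in how \cref{thm:cvp-to-sat} is established. The paper never writes down the coefficients of the degree-$p$ polynomial explicitly: it adds a dummy variable $x_{n+1}\equiv 1$ for the target and, for every $p$-tuple of variables with $k$ distinct entries, introduces all $2^{k}$ clauses over those $k$ variables with weights balanced around a large constant $D$, so that every assignment satisfies exactly $2^{k}-1$ of them with total weight $D$, except the all-ones assignment, whose total drops by $(-1)^{\sigma}\mvp(\vec{b}_{i_1},\dots,\vec{b}_{i_p})$; summing over tuples and applying \cref{cor:simple-mvp} recovers $(n+1)^{p}D-\|\basis\vec{z}-\vec{t}\|_p^p$. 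You instead expand $\|\basis\vec{z}-\vec{t}\|_p^p$ into an explicit multilinear polynomial $\sum_{S}c_S\prod_{i\in S}z[i]$ (evenness of $p$ is used identically in both arguments, to drop the absolute values) and then run the standard pseudo-Boolean-to-Max-SAT conversion, with the telescoping identity handling negative monomials. Both constructions yield $\poly(n)$ clauses of width at most $p$ on exactly $n$ variables, which is all the corollary needs. Your route is more self-contained and has the merit of producing manifestly positive integer clause weights (the paper's weights $\frac{D}{2^{k}-1}\pm\frac{1}{2^{k}-1}\mvp(\cdots)$ are not obviously integral as written); the paper's route avoids computing the $c_S$ and dovetails with the $\mvp$-based clique reductions used elsewhere. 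The one bookkeeping point you should add is that when distinct monomials generate syntactically identical clauses their weight contributions all have the same sign (all-negative clauses arise only from positive monomials, and clauses containing a positive literal arise only from negative monomials), so merged weights stay positive --- or one simply keeps the weighted clause multiset.
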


While the main result in \cref{thm:main} implicitly follows by combining the reduction in \cref{thm:cvp-to-sat} with the known Max-$2$-SAT algorithm of \cite{williams2005new}, we believe it can be more enlightening to see a more direct proof that does not go via Max-$2$-SAT.

A corollary of our results is that (in the $(0,1)$ case) $\CVP$ in even norms \emph{reduces} to $\CVP$ in odd norms (via a Karp reduction).
\begin{corollary}
    For any $p\in 2\Z_{>0}$ and $q\not\in 2\Z$, there exists a poly time many-one (Karp) reduction from $(0,1)$-$\CVP_p$ on lattice of rank $n$ to $(0,1)$-$\CVP_q$ on lattice of rank $n$. 
\end{corollary}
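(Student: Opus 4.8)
The plan is to prove this by composing two many-one reductions. The first is exactly \cref{thm:cvp-to-sat}: it turns a $(0,1)$-$\CVP_p$ instance on a rank-$n$ lattice into a Weighted-\textsf{Max}-$p$-$\SAT$ instance on $n$ variables with clauses of width $\le p$. So the entire remaining task is to build the second reduction, from Weighted-\textsf{Max}-$p$-$\SAT$ on $n$ variables to $(0,1)$-$\CVP_q$ on a rank-$n$ lattice, for the non-even exponent $q$. The rank-equals-number-of-variables bookkeeping is exactly why both building blocks must be stated with these matching parameters, and since a composition of polynomial-time Karp reductions is a polynomial-time Karp reduction, this suffices.

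For the second reduction I would use the ``isolating parallelepiped'' gadgets that underlie the SETH-hardness of $\CVP_q$ for $q\notin 2\Z$ \cite{BGS17, ABGS19}. Given a weighted $p$-CNF $\varphi$ on $z_1,\dots,z_n$, introduce one basis vector $\vec b_i$ per variable and \emph{no} auxiliary basis vectors --- in $(0,1)$-$\CVP$ the coefficient vector is already forced to lie in $\{0,1\}^n$, so the rank of the output lattice stays exactly $n$. For each clause $C$ of width $k\le p$, reserve a fresh block of $O_{p,q}(1)$ coordinates and place there an \emph{integral} isolating parallelepiped for $\ell_q$: vectors $\vec m_1,\dots,\vec m_k$ and a block $\vec s$ of the target such that $\bigl\|\sum_{j\in S}\vec m_j-\vec s\bigr\|_q$ equals a single value for every nonempty $S\subseteq[k]$ and a strictly larger value for $S=\emptyset$; this is possible over $\Z$ precisely because $q$ is not an even integer \cite{BGS17}. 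Placing $+\vec m_j$ (resp.\ $-\vec m_j$) into $\vec b_{i_j}$ according to whether the $j$-th literal of $C$ is positive or negative, and putting the matching constant into $\vec t$, arranges that the $\ell_q^q$-mass of $C$'s block is a fixed constant exactly when $C$ is satisfied, and that constant plus a fixed positive gap when $C$ is violated. To realize the weight $w_C$, rescale so that all clause blocks share one gap unit, and then write the resulting integer multiplicity (a fixed multiple of $w_C$) as a sum of $O_q(1)$ perfect $q$-th powers via the Hilbert--Waring theorem, taking that many copies of the block with entries multiplied by $\lambda$ for the copy contributing $\lambda^q$. Summing over all clauses gives $\|\basis\vec z-\vec t\|_q^q = A+\Gamma\bigl(W-\mathrm{OBJ}_\varphi(\vec z)\bigr)$, where $W=\sum_C w_C$, $\Gamma$ and $A$ are fixed integers, and $\mathrm{OBJ}_\varphi(\vec z)$ is the total weight satisfied by $\vec z$; hence ``is $\mathrm{OBJ}_\varphi(\vec z)\ge W_0$ for some $\vec z$?'' becomes ``is $\|\basis\vec z-\vec t\|_q\le d'$ for some $\vec z\in\{0,1\}^n$?'' with $(d')^q=A+\Gamma(W-W_0)$. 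The ambient dimension grows only polynomially, all numbers have polynomially many bits, and the map is computable in polynomial time.

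The crux is entirely this second reduction, and within it the handling of weights. A single clause gadget only separates ``satisfied'' from ``violated'' by a \emph{fixed} additive gap in $\ell_q^q$, and a clause cannot simply be rescaled to weight $w_C$, because multiplying a gadget's entries by $\lambda$ multiplies its $\ell_q^q$-contribution by $\lambda^q$ rather than by $\lambda$; the Hilbert--Waring decomposition is precisely the device that lets us hit an arbitrary integer multiplicity with $O_q(1)$ scaled copies, which keeps the reduction polynomial and --- crucially --- adds only coordinates, never basis vectors, so the output rank remains $n$. The remaining points are routine: checking that the construction of \cite{BGS17} can be taken with integer entries and with the violated assignment on the \emph{far} side of the gap, and that the finitely many clause widths $1,\dots,p$ can be brought to a common gap unit by a fixed rescaling (e.g.\ by absorbing the lcm of the per-width gaps into the multiplicities).
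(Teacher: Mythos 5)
Your proposal follows the same route as the paper: compose \cref{thm:cvp-to-sat} with the known reduction from Weighted-\textsf{Max}-$p$-\textsf{SAT} to $(0,1)$-$\CVP_q$ for $q\notin 2\Z$. The paper treats that second step purely as a citation to the isolating-parallelepiped machinery of \cite{BGS17,ABGS19} (the $q\notin 2\Z$ analogue of \cref{thm:sat-to-cvp}), whereas you re-derive it; that re-derivation is broadly right, but your Hilbert--Waring device for realizing a clause weight $w_C$ as a sum of $O_q(1)$ perfect $q$-th powers only makes sense when $q$ is a positive integer, while the statement covers all $q\notin 2\Z$, including non-integer $q$. For those $q$ you should instead do what \cite{BGS17} actually does --- scale each clause gadget by $w_C^{1/q}$ so its $\ell_q^q$ contribution scales linearly in $w_C$, handling irrationality by polynomial-precision approximation --- or simply invoke their theorem as a black box, which already accepts weighted instances and preserves the variable count, so the rank stays $n$ as required.
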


This follows because Max-$p$-SAT for any $p$ can be reduced to $\CVP_q$ for any $q$ except even integers.
The reverse direction (from any odd $q$ to any even $p$) would collapse the polynomial hierarchy \cite{AK23}. 
In a sense, we generalize the result of Regev and Rosen \cite{RR06} who reduced the $p=2$ case to any $\ell_q$ norm; however, the blowup in the approximation factor in our reduction is worse.

\paragraph{Connection to $k$-Clique}

Our algorithm is not only a reduction to triangle detection but also to the $k$-Clique problem for any $k\geq 3$. This is formally stated in~\cref{thm:CVPtoclique}. 
It may appear that our result is only a small step away from beating the $2^n$ bound for $(0,1)$-$\CVP$ on an arbitrary basis (without a bound on the coordinate values).
However, our technique is unlikely to achieve that without further breakthroughs, because it requires us to break the $n^k$ bound for min-weight-$k$-clique on arbitrary edge weights - refuting a conjecture in fine-grained complexity \cite{AbboudWW14,BackursDT16,BackursT17,BringmannGMW20}.
This is one of the most important conjectures in the field because it unifies two of the main three conjectures \cite{AbboudBDN18}, namely the All Pairs Shortest Paths (APSP) conjecture and the Orthogonal Vectors (OV) conjecture (which is the representative of SETH inside P). 
Viewed negatively, our result shows that breaking the $n^k$ bound for min-weight-$k$-clique is even harder than previously thought because it would be a breakthrough that all the cryptographers working on lattice problems have missed. We think this is interesting support for this conjecture and its numerous consequences (which include all APSP-based lower bounds, almost all SETH-based lower bounds, and several others; see \cite{williams2018some}).
Moreover, our reductions show that breaking the $n^{\omega k/3}$ bound of \emph{unweighted} $k$-Clique would improve the above results; thus also basing the unweighted $k$-Clique conjecture \cite{AbboudBW18} on the hardness of lattice problems. 

In \cref{thm:CVP-even-norm-to-clique} we generalize the reduction for any even $p \in 2\Z_{>0}$ to reduce $(0,1)$-$\CVP_p$ to $k$-Clique on $p$-uniform hyper-graphs. Unfortunately, the latter problem is unlikely to have non-trivial algorithms \cite{AbboudBDN18,LincolnWW18} when $p\geq 3$.
\iffalse
\begin{table}
\centering
\begin{tabular}{ccc}
\toprule
 \multirow{2}{*}{Approximation Factor} & \multicolumn{2}{c}{Time Complexity}  \\ \cmidrule{2-3}
  &  Previous Best & Our results \\ \midrule
  Exact   & $2^n$ (by brute force)  & $\tilde{\mathcal{O}}\left((1.7299)^n\right)$ (*) \\ 
   $(1+\eps)$   & $2^n$ (by brute force)  & $\tilde{\mathcal{O}}\left((1.7299)^n/\eps\right)$  \\ 
   O(1)  & $1.7435^n$ \cite{WLWFindingShortest15,EV20}  & $\tilde{\mathcal{O}}\left((1.7299)^n\right)$  \\ 
     $\sqrt{n}$  & $2^{n/2}$ \cite{ALS21,ACKLS21} & - \\ \bottomrule
\end{tabular}
\caption{Algorithms for $(0,1)$-$\CVP$ and $(0,1)$-$\SVP$. The result marked by (*) only holds if the absolute values of all input vectors are bounded by $2^{o(n)}$. }
\label{table}
\end{table}
\fi

\subsection{Technical Overview}
\label{sec:overview}

The first idea in our reduction from $(0,1)$-$\CVP$ to minimum-weight $k$-Clique (\cref{thm:CVPtoclique}) is to use a \emph{split-and-list} approach; a standard technique in exponential time algorithms similar to the famous meet-in-the-middle algorithm for \emph{Subset Sum}.
We partition the $n$ basis vectors $\basis$ arbitrarily into $k$ sets $\basis^{(1)},\ldots, \basis^{(k)}$ of size $n/k$ each, and then for each set we enumerate all $N:=2^{n/k}$ vectors attainable by taking the sum of a subset of vectors from the set.
This produces $k$ lists $\mathbf{C}^{(1)},\ldots,\mathbf{C}^{(k)} $ of $N=2^{n/k}$ vectors each.
Observe that the closest vector we are looking for $\vec{v}=\sum_{i=1}^n z[i]\vec{b}_i$ can be represented as the sum of $k$ vectors $z = \vec{c}_1 +\cdots + \vec{c}_k$, one from each list $\vec{c}_i \in \mathbf{C}^{(i)}$, and moreover, any sum of $k$ vectors from $\mathbf{C}^{(1)} \times \cdots \times \mathbf{C}^{(k)}$ is a valid candidate for being the closest vector.
Thus, our task becomes to find the optimal way of picking one vector from each list.
A similar idea of applying split-and-list to the $(0,1)$-$\CVP$ problem (over any norm) was used by Gupte and Vaikuntanathan \cite{gupte2021fine} to prove conditional lower bounds for the \emph{Sparse Linear Regression} problem.

Superficially, it may seem that we are done: simply represent each vector $\vec{c}_i$ with a node and let a $k$-clique represent the sum of $k$ vectors. All we have to do is ensure that the total weight of the $k$-clique corresponds to the distance to the target $\vec{t}$. Then, the minimum-weight $k$-clique will give us the closest vector.

However, the total weight of a $k$-clique can only be influenced by \emph{pairwise} contributions from its $k$ nodes.
In the $\CVP$ interpretation, we require that the distance of the vector $z = \vec{c}_1 +\cdots + \vec{c}_k$ from the target $\vec{t}$ can be measured by only considering the sum of ${k \choose 2}$ values that depend only on $\vec{c}_{i}, \vec{c}_j$ for all $i,j \in [k]$, i.e. the sum of some pairwise contributions $\sum_{i,j \in [k]} f(\vec{c}_{i}, \vec{c}_j)$.
Unfortunately, this is impossible (and would refute SETH) \emph{unless we restrict the metric space}.

Our second (and main) idea is that \emph{under the Euclidean norm} the expression $\| \vec{c}_1 +\cdots + \vec{c}_k - t \|^2$ can be broken into a sum that depends only on pairs $\vec{c}_{i}, \vec{c}_j$ and can therefore be implemented as the weight of a $k$-clique under a careful choice of weights.
Interestingly, this ability to separate the distance in $\ell_p$ norm into $p$-wise contributions only holds when $p$ is even, and it is also the underlying technical reason that enables the barrier results of Aggarwal and Kumar \cite{AK23}. 

At a high level, the reduction to Max-SAT (\cref{thm:cvp-to-sat}) can be viewed as setting $k$ to be $n$ in the above reduction, and then adjusting several implementation details that have to do with SAT vs. Clique. 
When $k=n$ we are essentially partitioning the set of basis vectors into singletons and thinking of all possible $2^1=2$ choices of either choosing the vector or not. Naturally, each such singleton can be represented with a Boolean variable that determines if the corresponding vector is chosen in the solution.
Then it remains to encode the distance of the solution from the target using pairwise contributions (that can be encoded as the weight of a width-$2$ clause). Extra challenges (compared to the $k$-clique reduction) arise because we are forced to consider contributions from variables set to $0$ (because they may also satisfy clauses).

\section{Preliminaries}
\label{sec:prelim}

We will use $\R$, $\Z$, $\Z_{>0}$, and $\Q$ to represent the sets of real numbers, integers, positive integers, and rational numbers, respectively. For any positive integer $k$, we use $[k]$ to denote the set $\{1,2,\ldots,k\}$. We will use boldface lower-case letters to denote column vectors, e.g., $\vec{v}\in \R^m$, and we will use $v[i]$ to denote the $i^{\text{th}}$ coordinate of $\vec{v}$. We use boldface upper-case letters to denote a matrix, e.g., $\mat{M}\in \R^{m\times n}$ and $\vec{m}_i$ to denote the $i^{\text{th}}$ column vector of $\mat{M}$. For  vector $\vec{v}\in \R^m$, the $\ell_p$ norm of vector $\vec{v}$ for $p\in [1,\infty)$,  is defined as: 

\[\|\vec{v}\|_p:=\left(\sum_{i=1}^m |v[i]|^p\right)^{1/p},\]
and for $p=\infty$ it is defined as: 
\[\|\vec{v}\|_\infty:=\max_{i=1}^m\left\{ |v[i]|\right\}.\]
We omit the parameter $p$ when $p=2$ and write $\|\vec{v}\|$ to denote $\|\vec{v}\|_2$, a.k.a the Euclidean norm.

\subsection{Lattice Problems}
For any set of linearly independent vectors $\basis=(\vec{b}_1,\vec{b}_2,\ldots,\vec{b}_n)\in \Q^{m\times n}$ and for any positive integers $n$ and $m\geq n$, the lattice $\lat$ generated by Basis $\basis$ is defined as follows: 
\[\lat=\lat(\basis):=\left\{\sum_{i=1}^n z[i]\vec{b}_i\;|\; z[i]\in \Z\right\}.\]
Here, we call $n$ the rank of the lattice and $m$ the dimension. Note that a lattice can have infinitely many bases. We write $\lat(\basis)$ to emphasize that the lattice $\lat$ is generated by the basis $\basis$. 
For any vector $\vec{t}\in \R^m$, we use $\dist_p(\lat,\vec{t})$ to denote the distance of the vector $\vec{t}$ from the lattice $\lat$ in $\ell_p$ norm, i.e. $\dist_p(\lat, \vec{t})=\min_{\vec{v}\in \lat}\{\|\vec{v}-\vec{t}\|_p\}$. We will also use $\dist_p(\basis,\vec{t})$ in place of $\dist_p(\lat(\basis),\vec{t})$.

In this work, we focus on restricted versions of $\SVP$ and $\CVP$, namely $(0,1)$-$\CVP$ and $(0,1)$-$\SVP$.
Recall \cref{def:01CVP} ($(0,1)$-$\CVP$) from the introduction. 

\begin{definition}[$(0,1)$-$\SVP$]
\label{def:01SVP}
For any  $p\in [1,\infty]$, the $(0,1)$-$\SVP_p$ is a problem defined as follows: Given a basis $\basis\in \Z^{m\times n}$ of lattice $\lat$ and a number $d>0$, the goal is to distinguish between:
\begin{itemize}
\item \textsf{YES} instances where $\exists \vec{z}\in \{0,1\}^n\setminus \vec{0}$ for which $\|\basis\vec{z}\|_p\leq d$, and
\item \textsf{NO} instances where  
$\forall \vec{z}\in \{0,1\}^n \setminus \vec{0}$, $\|\basis \vec{z}\|_p >  d$.
\end{itemize}
\end{definition}

We  omit the parameter $p$ when $p=2$ and write $\CVP$, $\SVP$, $(0,1)$-$\CVP$, $(0,1)$-$\SVP$ for $\CVP_p$, $\SVP_p$, $(0,1)$-$\CVP_p$, $(0,1)$-$\SVP_p$ respectively. 

By a small modification to a reduction from \cite{GMSS99}, we can get a reduction from $(0,1)$-$\SVP$ on lattice of rank $n$ to $n$ instances of $(0,1)$-$\CVP$ on lattice of rank $n-1$. In this paper, we will only present our reduction/algorithm for $(0,1)$-$\CVP$. By above reduction, similar consequences will also hold for $(0,1)$-$\SVP$.  

Here, we have defined the decision versions of lattice problems, but our reductions also apply to the search versions of these problems. However, it is important to note that currently, we only have known search-to-decision reductions for lattice problems with an approximation factor slightly greater than one~\cite{SteSearchtodecisionReductions16}. It remains an open problem to show search-to-decision reductions for lattice problems with constant or larger approximation factors. 

\subsection{Satisfiability and Clique}
A $k$-$\SAT$ formula $\Psi=\wedge_{i=1}^m C_i $ on Boolean variables $x_1,\ldots, x_n$ is a conjunction of $m$ clauses $C_1,\ldots, C_m$ where each clause $C_i$ is a disjunction of at most $k$ literals and a literal is either a variable $x_j$ or its negation $\neg x_j$. 

\begin{definition}[Max-$k$-$\SAT$]
    Given a $k$-$\SAT$ formula $\Psi$ on $n$ variables and a number $\delta\in [0,1]$, the goal is to distinguish between \textsf{YES} instances where there exists an assignment that satisfies at least $\delta$ fraction of clauses of $\Psi$ and \textsf{NO} instances where all assignments satisfy a less than $\delta$ fraction of the clauses.
\end{definition}

\begin{definition}[Weighted Max-$k$-$\SAT$]
    Given a $k$-$\SAT$ formula $\Psi=\wedge_{i=1}^m C_i$ on $n$ variables and $m$ clauses $C=\{C_1,\ldots, C_m\}$, a clause weight function $w: C\rightarrow \Z_{>0}$, and a number $d$, the goal is to distinguish between \textsf{YES} instances where there exists an assignment for which the sum of weights of satisfied clauses is at least $d$ and \textsf{NO} instances where for all assignments the sum of weight of satisfied clauses is less than $d$.
\end{definition}

A hypergraph $\mathcal{G}=(\mathcal{{V},\mathcal{E}})$ consists of a finite set of vertices $\mathcal{V}$ and a set of hyperedges $\mathcal{E}$. In this work, we will only focus on $p$-uniform hypergraphs where all hyperedges are of exactly $p$ vertices. 
A $k$-clique is a set of $k$ nodes that have all ${k \choose p}$ hyper-edges between them, and its \emph{total weight} is defined as the sum of the weights of all of these edges.

\begin{definition}[minimum-weight-$k$-Clique]
For any $p\in \Z_{\geq 2}$, the minimum-weight-$k$-Clique problem defined as follows: Given a $p$-uniform hypergraph $\mathcal{G}=(\mathcal{{V},\mathcal{E}})$, a weight function $w: \mathcal{E}\rightarrow \Z_{>0}$, and an integer $d$, the goal is to distinguish between \textsf{YES} instances where there exists a Clique of $k$ vertices with weight at most $d$, and \textsf{NO} instances where all $k$-cliques have total weight greater than $d$.
\end{definition}

\subsection{Multi-Vector Products}
The following notion of multi-vector products ($\mvp$) defined in \cite{AK23} will be convenient for us. For any $p\in 2\Z_{>0}$, 
\[\forall \vec{v}_1,\ldots, \vec{v}_p\in \R^m: \mvp(\vec{v}_{1},\ldots,\vec{v}_p):=\sum_{i=1}^m \left(\prod_{j=1}^{p}v_j[i] \right).\]

Notice that it is an extension of inner-product to $\ell_p$ norm for any even integer $p$ since for any $\vec{v}_1,\vec{v}_2\in \R^m$, $\mvp(\vec{v}_1,\vec{v}_2)=\sum_{i=1}^m v_1[i]\cdot v_2[i]=\langle \vec{v}_1,\vec{v}_2 \rangle$.

Simple calculations prove the following helpful lemma stated in \cite[Lemma~4.1]{AK23}.

\begin{lemma}\cite[Lemma~3.1]{AK23}\label{lem:mvp}
	For any $p\in2\Z_{>0}$, and vectors $\vec{v}_1,\ldots,\vec{v}_k$, for any $a_1,\ldots, a_k\in \Z$, $\|a_1\vec{v}_1+\ldots +a_n\vec{v}_n\|_p^p$ can be computed in polynomial time given only $a_1,\ldots,a_k$ and $\mvp(\vec{v}_{i_1},\ldots, \vec{v}_{i_p})$ for all $(i_1,\ldots, i_p)\in [k]^p$. Moreover, 
	\[ \|a_1\vec{v}_1+\cdots+a_k\vec{v}_k\|_p^p=\sum_{(i_1,\ldots,i_p)\in [k]^p} (a_{i_1}\cdots a_{i_p})\mvp(\vec{v}_{i_1},\ldots, \vec{v}_{i_p}).
	\]
\end{lemma}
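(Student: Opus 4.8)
The plan is to expand $\|a_1\vec{v}_1+\cdots+a_k\vec{v}_k\|_p^p$ coordinate-by-coordinate and apply the multinomial-style expansion of a $p$-th power. Fix a coordinate $i\in[m]$ and write $s[i] := \sum_{j=1}^k a_j v_j[i]$ for the $i$-th coordinate of the combined vector. Since $p\in 2\Z_{>0}$ is a (positive) even integer, $|s[i]|^p = (s[i])^p$, so no absolute-value sign is in the way and we are free to expand the $p$-th power of a sum of $k$ terms. The first step is therefore to note
\[
\|a_1\vec{v}_1+\cdots+a_k\vec{v}_k\|_p^p \;=\; \sum_{i=1}^m \Bigl(\sum_{j=1}^k a_j v_j[i]\Bigr)^{p}.
\]

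Next I would expand the inner $p$-th power using distributivity: $\bigl(\sum_{j=1}^k a_j v_j[i]\bigr)^p = \sum_{(i_1,\ldots,i_p)\in[k]^p}\prod_{\ell=1}^p \bigl(a_{i_\ell} v_{i_\ell}[i]\bigr)$, where the index tuple ranges over all ordered $p$-tuples from $[k]$ (this is the "ordered" form of the multinomial theorem, which avoids having to track multinomial coefficients). Pulling out the scalars, each summand is $(a_{i_1}\cdots a_{i_p})\cdot\bigl(v_{i_1}[i]\cdots v_{i_p}[i]\bigr)$. Now swap the order of the two summations (over $i\in[m]$ and over tuples), which is legal since both sums are finite:
\[
\|a_1\vec{v}_1+\cdots+a_k\vec{v}_k\|_p^p
= \sum_{(i_1,\ldots,i_p)\in[k]^p} (a_{i_1}\cdots a_{i_p})\sum_{i=1}^m \bigl(v_{i_1}[i]\cdots v_{i_p}[i]\bigr)
= \sum_{(i_1,\ldots,i_p)\in[k]^p} (a_{i_1}\cdots a_{i_p})\,\mvp(\vec{v}_{i_1},\ldots,\vec{v}_{i_p}),
\]
where the last equality is just the definition of $\mvp$. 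This proves the displayed identity, which is the "moreover" part of the statement.

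Finally, for the computational claim: the right-hand side is a sum of $k^p$ terms; given the $k^p$ precomputed values $\mvp(\vec{v}_{i_1},\ldots,\vec{v}_{i_p})$ and the integers $a_1,\ldots,a_k$, each term $(a_{i_1}\cdots a_{i_p})\,\mvp(\vec{v}_{i_1},\ldots,\vec{v}_{i_p})$ is computed with $p-1$ integer multiplications (plus one more), and all $k^p$ terms are summed; since $p$ and $k$ are parameters of polynomial size and the integers involved have polynomially many bits, the whole evaluation runs in polynomial time. I do not anticipate a genuine obstacle here — the only point requiring care is the use of $p$ being even, which ensures $|s[i]|^p=(s[i])^p$ so that the raw polynomial expansion is valid; for odd $p$ the absolute values would not distribute and the identity would fail.
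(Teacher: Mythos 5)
Your proof is correct and is exactly the ``simple calculation'' the paper alludes to when citing this lemma from [AK23]: expand coordinate-wise, use that $p$ even removes the absolute values, apply the ordered multinomial expansion, and swap the finite sums to recover the definition of $\mvp$. No gaps; the evenness of $p$ is indeed the only point requiring care, and you flag it appropriately.
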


We will be using the following corollary of \cref{lem:mvp}.
\begin{corollary}\label{cor:simple-mvp}
	For any $p\in2\Z_{>0}$, and vectors $\vec{v}_1,\ldots,\vec{v}_k$, we have
	\[ \|\vec{v}_1+\cdots+\vec{v}_{k-1}-\vec{v}_{k}\|_p^p=\sum_{(i_1,\ldots,i_p)\in [k]^p} (-1)^{\sigma(i_1,\cdots,i_p)}\mvp(\vec{v}_{i_1},\ldots, \vec{v}_{i_p}).
	\]
 Here, for any tuple of $p$ elements $\mathcal{S}\subset [k]^p$, $\sigma(\mathcal{S})$ denote the number of occurrences of $k$ in $\mathcal{S}$.
\end{corollary}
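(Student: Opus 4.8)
The plan is to derive this directly from \cref{lem:mvp} by specializing the integer coefficients. First I would set $a_i = 1$ for every $i \in [k-1]$ and $a_k = -1$, so that $a_1\vec{v}_1 + \cdots + a_k\vec{v}_k = \vec{v}_1 + \cdots + \vec{v}_{k-1} - \vec{v}_k$. Applying the ``moreover'' part of \cref{lem:mvp} with these coefficients gives
\[
\|\vec{v}_1 + \cdots + \vec{v}_{k-1} - \vec{v}_k\|_p^p = \sum_{(i_1,\ldots,i_p) \in [k]^p} (a_{i_1}\cdots a_{i_p})\,\mvp(\vec{v}_{i_1},\ldots,\vec{v}_{i_p}).
\]

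Next I would simplify the scalar coefficient $a_{i_1}\cdots a_{i_p}$ for a fixed tuple $(i_1,\ldots,i_p) \in [k]^p$. Each factor $a_{i_j}$ equals $1$ unless $i_j = k$, in which case it equals $-1$; hence the product equals $(-1)^{t}$, where $t$ is the number of coordinates $j$ with $i_j = k$. By the definition of $\sigma$ we have $t = \sigma(i_1,\ldots,i_p)$, so substituting $(-1)^{\sigma(i_1,\ldots,i_p)}$ for $a_{i_1}\cdots a_{i_p}$ in the sum above yields exactly the claimed identity.

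The only point worth flagging is that this argument, like \cref{lem:mvp} itself, relies on $p$ being an even positive integer: this is what makes $|x|^p = x^p$ and lets $\|\cdot\|_p^p$ expand multilinearly into $\mvp$ terms. Since everything needed is already packaged in \cref{lem:mvp}, there is no genuine obstacle; the corollary is just the instantiation at the coefficient vector $(1,\ldots,1,-1)$.
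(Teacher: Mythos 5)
Your proposal is correct and matches the paper's intent exactly: the paper states this as an immediate corollary of \cref{lem:mvp} without writing out a proof, and the intended derivation is precisely your instantiation at $a_1=\cdots=a_{k-1}=1$, $a_k=-1$ together with the observation that $a_{i_1}\cdots a_{i_p}=(-1)^{\sigma(i_1,\ldots,i_p)}$.
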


\section{From CVP to Clique, and Algorithmic Consequences}
\label{sec:cvp-to-clique}

In this section, we present a reduction from $\{0,1\}$-$\CVP_2$ on lattice of rank $n$ to minimum-weight-$k$-Clique on undirected graph with $k\cdot 2^{\ceil{n/k}}$ vertices that is overviewed in \cref{sec:overview} and we will prove the main algorithmic results of this paper.

We will use the following well-known lemma\footnote{It can also be seen as a corollary of \cref{lem:mvp}.} about Euclidean norm.
\begin{lemma}\label{cor:euclidean-norm}
	For any vectors $\vec{v}_1,\ldots, \vec{v}_k,\vec{t}$,
	\[\|\vec{v}_1+\cdots+\vec{v}_k-\vec{t}\|^2=\|\vec{t}\|^2+\sum_{i=1}^{k} \left(\|\vec{v}_i\|^2-2\cdot  \langle \vec{v}_i,\vec{t} \rangle\right)+2\sum_{\substack{(i,j)\in [k]^2 \\\text{and } i<j}} \langle \vec{v}_i,\vec{v}_j \rangle . \] 
\end{lemma}

\begin{theorem}
\label{thm:CVPtoclique}
  For positive integers $n$, and $k\geq 2$,  there exists a Karp-reduction from $\{0,1\}$-$\CVP_2$ on lattices of rank $n$ to minimum-weight-$k$-Clique on undirected graphs with $N=k\cdot 2^{\lceil n/k\rceil}$ vertices.\\
  Furthermore, if the absolute value of the coordinates in all basis and target vectors is at most $2^\eta$, then the reduction takes only $O(m \cdot \eta^2\cdot (k\cdot 2^{\lceil n/k\rceil})^2)$ time and space. Additionally, the edge weights of the reduced graph are bounded by ${O}(m\cdot 2^{2\eta})$.
\end{theorem}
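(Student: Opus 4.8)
The plan is to implement the split-and-list idea from \cref{sec:overview}. First I would fix a partition of $[n]$ into $k$ blocks $I_1,\dots,I_k$ with $|I_i|\le\lceil n/k\rceil$ (padding each block with zero columns so that $|I_i|=\lceil n/k\rceil$ exactly), inducing a split $\basis=(\basis^{(1)}\mid\cdots\mid\basis^{(k)})$. For each block I would list all $2^{\lceil n/k\rceil}$ vectors $\mathbf C^{(i)}=\{\basis^{(i)}\vec z:\vec z\in\{0,1\}^{\lceil n/k\rceil}\}$ (with multiplicity). Then I would build the complete $k$-partite graph $\mathcal G$ whose $i$-th part $V^{(i)}$ has one node per listed vector of $\mathbf C^{(i)}$ and whose edges join every two nodes lying in \emph{distinct} parts; thus $|V(\mathcal G)|=N=k\cdot 2^{\lceil n/k\rceil}$, and because there are no intra-part edges the $k$-cliques of $\mathcal G$ are precisely the transversals that select one node $u_i\in V^{(i)}$ per part. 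Attaching $\vec v_i\in\mathbf C^{(i)}$ to $u_i$, every $\{0,1\}$-combination $\basis\vec z$ equals $\vec v_1+\cdots+\vec v_k$ for exactly one transversal and conversely, so it remains to choose edge weights whose clique-total is an order-preserving affine function of $\|\vec v_1+\cdots+\vec v_k-\vec t\|^2$.

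The crux is the weight gadget, where I would invoke \cref{cor:euclidean-norm}. Setting $a_\ell:=\|\vec v_\ell\|^2-2\langle\vec v_\ell,\vec t\rangle$, that lemma writes $\|\vec v_1+\cdots+\vec v_k-\vec t\|^2$ as the input-independent constant $\|\vec t\|^2$, plus the per-node terms $\sum_\ell a_\ell$, plus the pairwise terms $2\sum_{i<j}\langle\vec v_i,\vec v_j\rangle$. The pairwise part is already edge-shaped; to force the per-node terms into edge weights I would smear $a_\ell$ over the $k-1$ clique edges touching $u_\ell$. Concretely, for $u_i\in V^{(i)}$ and $u_j\in V^{(j)}$ with $i\ne j$ I would set
\[\tilde w(\{u_i,u_j\})\;:=\;2(k-1)\,\langle\vec v_i,\vec v_j\rangle+a_i+a_j,\]
where the factor $k-1$ merely clears the denominator and keeps every $\tilde w$ an integer. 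In a transversal clique with node-vectors $\vec v_1,\dots,\vec v_k$ each $a_\ell$ is then counted once for each of the $k-1$ edges at $u_\ell$, so the total $\tilde w$-weight equals $2(k-1)\sum_{i<j}\langle\vec v_i,\vec v_j\rangle+(k-1)\sum_\ell a_\ell=(k-1)\bigl(\|\vec v_1+\cdots+\vec v_k-\vec t\|^2-\|\vec t\|^2\bigr)$.

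Since the target problem requires \emph{positive} integer weights, I would then fix an integer $M>\max_e|\tilde w(e)|$, output the weights $w(e):=\tilde w(e)+M$, and set the threshold $d^\star:=(k-1)\bigl(\lfloor d^2\rfloor-\|\vec t\|^2\bigr)+\binom{k}{2} M$. Every clique now has integer total $(k-1)(\|\vec v_1+\cdots+\vec v_k-\vec t\|^2-\|\vec t\|^2)+\binom{k}{2} M$, which is $\le d^\star$ exactly when $\|\vec v_1+\cdots+\vec v_k-\vec t\|^2\le\lfloor d^2\rfloor$, i.e.\ (as $\|\basis\vec z-\vec t\|^2\in\Z_{\ge0}$ and $k\ge2$) exactly when $\|\basis\vec z-\vec t\|\le d$. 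Hence a \textsf{YES} instance of $\{0,1\}$-$\CVP_2$ yields a transversal clique of weight $\le d^\star$, while in a \textsf{NO} instance every $k$-clique, being a transversal, has weight $>d^\star$; the same transversal/assignment bijection carries over the search version.

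For the resources: listing each $\mathbf C^{(i)}$ in Gray-code order costs $O(m)$ arithmetic operations per new vector, hence $O(m\cdot 2^{\lceil n/k\rceil})$ per block; each listed vector, being a $\{0,1\}$-sum of at most $\lceil n/k\rceil$ basis vectors, has entries of magnitude at most $\lceil n/k\rceil\cdot 2^\eta$. Building $\mathcal G$ then costs one $m$-dimensional inner product of such integers per edge (the $a_\ell$'s being precomputed once per node), i.e.\ $O(m\cdot\eta^2\cdot N^2)$ bit operations and proportional space; and $|\tilde w(e)|$, hence $M$ and all output weights, are $O(m\cdot 2^{2\eta})$ up to factors polynomial in $n$. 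I expect the only real care to be in the weight accounting of the second paragraph --- checking that each per-node term $a_\ell$ is charged to exactly $k-1$ of the $\binom{k}{2}$ clique edges and each inner product to exactly one, so that the clique-total collapses to $(k-1)(\mathrm{dist}^2-\|\vec t\|^2)$ --- together with the trivial but necessary observation that the uniform shift by $M$ preserves the comparison with the threshold; everything else is the routine bookkeeping above.
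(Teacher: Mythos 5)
Your proposal is correct and follows essentially the same route as the paper: split-and-list into $k$ blocks, a complete $k$-partite graph on $k\cdot 2^{\lceil n/k\rceil}$ nodes, and edge weights obtained by spreading each per-node term $\|\vec v_\ell\|^2-2\langle\vec v_\ell,\vec t\rangle$ over the $k-1$ incident clique edges via the Euclidean decomposition of \cref{cor:euclidean-norm}. The only difference is cosmetic but in your favor: the paper keeps the fractional coefficients $1/(k-1)$ and $1/\binom{k}{2}$ (and possibly negative weights) in the edge weights and compares directly to $d^2$, whereas you scale by $k-1$ and shift by $M$ to meet the positive-integer-weight requirement of the stated clique problem, adjusting the threshold accordingly.
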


\begin{proof}
 Given a $(0,1)$-$\CVP$ instance with
 basis $\basis=(\vec{b}_1,\ldots \vec{b}_n)\in \Z^{m\times n} $, target $\vec{t}\in \Z^m$ and a number $d>0$, where each coordinate is bounded by $2^{\eta}$, we construct a minimum-weight-$k$-Clique instance on a graph $\mathcal{G}=(\mathcal{V},\mathcal{E})$ with $k\cdot 2^{n/k}$ vertices. For simplicity, let's assume that $n$ is an integer multiple of $k$.

    First, partition the basis vectors into $k$ sets $\basis^{(1)},\ldots, \basis^{(k)}$ each consisting of $n/k$ vectors such that $\basis^{(i)}=\left\{\vec{b}_{\frac{n}{k}(i-1)+1},\ldots, \vec{b}_{i\frac{n}{k}}\right\}$. We construct $\mathbf{C}^{(i)}=[\vec{c}_{1}^{(i)}\ldots \vec{c}_{2^{n/k}}^{(i)} ]\in \Z^{m\times 2^{n/k}}$ for $i\in [k]$, where the columns of $\mathbf{C}^{(i)}$ represents  $\{0,1\}$-combinations of vectors from $\mathbf{B}^{(i)}$ and each vector in $\mathbf{C}^{(i)}$ will correspond to a vertex in the reduced graph.
    Let
     $\mathcal{V}:=\{v_{j}^{(i)}| i\in [k], j\in [2^{n/k}]\}$ 
     be the set of vertices and 
     \[\mathcal{E}:=\{e_{i_1,j_1, i_2,j_2}=(v_{j_1}^{(i_1)},v_{j_2}^{(i_2)}) | i_1, i_2\in [k], i_1<i_2, j_1,j_{2}\in [2^{n/k}] \}\]
     be the edge set. Notice that it is a $k$-partite graph.  For all edge $e_{i_1,j_1, i_2,j_2}\in \mathcal{E}$, we define the edge weight as follows: 
    
       \[ w(e_{i_1,j_1,i_2,j_2}):= 2\langle\vec{c}_{j_1}^{(i_1)},\vec{c}_{j_2}^{(i_2)}\rangle+\frac{1}{k-1}\left(\|\vec{c}_{j_1}^{(i_1)}\|^2+\|\vec{c}_{j_2}^{(i_2)}\|^2 - 2\langle \vec{c}_{j_1}^{(i_1)},\vec{t} \rangle-2\langle \vec{c}_{j_2}^{(i_2)},\vec{t}\rangle\right)+\left(\binom{k}{2}\right)^{-1}\|\vec{t}\|^2 .\]
    
    It is trivial that the reduction takes $O(m\cdot \eta^2 \cdot (k\cdot 2^{n/k})^2)$ time and space. Notice that the edge weights are also bounded by $O(m\cdot2^{2\eta})$. In the rest of the proof, we will show that there exists a $k$-clique of weight at most $d^2$ if the given $(0,1)$-$\CVP_2$ instance is a \textsf{YES} instance, and otherwise all $k$-cliques have weight greater than $d^2$.
    
    We claim that the weight of a $k$-clique in the reduced graph is at least $\min_{\vec{z}\in \{0,1\}^n} \|\basis \vec{z}-\vec{t}\|^2$. Notice that, any $k$-clique in a $k$-partite graph must have exactly one vertex from each partition.  Let the vertices ${v}_{j_1}^{(1)},\ldots, {v}_{j_k}^{(k)}$ form a $k$-Clique, then the weight of the the clique is 
    \begin{align*}
    &\sum_{\substack{(x_1,x_2)\in [k]^2\\ \text{ and }x_1<x_2 }} w(e_{{x_1},j_{x_1}}, e_{{x_2},j_{x_2}})\\
    =& \sum_{\substack{(x_1,x_2)\in [k]^2\\ \text{ and }x_1<x_2 }}2\langle\vec{c}_{j_{x_1}}^{({x_1})},\vec{c}_{j_{x_2}}^{({x_2})}\rangle+\frac{1}{k-1}\left(\|\vec{c}_{j_{x_1}}^{({x_1})}\|^2+\|\vec{c}_{j_{x_2}}^{({x_2})}\|^2 - 2\langle \vec{c}_{j_{x_1}}^{({x_1})},\vec{t} \rangle-2\langle \vec{c}_{j_{x_2}}^{({x_2})},\vec{t}\rangle\right)+\left(\binom{k}{2}\right)^{-1}\|\vec{t}\|^2\\
     =& \|\vec{t}\|^2+ \sum_{\substack{(x_1,x_2)\in [k]^2\\ \text{ and }x_1<x_2 }}2\langle\vec{c}_{j_{x_1}}^{({x_1})},\vec{c}_{j_{x_2}}^{({x_2})}\rangle+\frac{1}{k-1}\left(\|\vec{c}_{j_{x_1}}^{({x_1})}\|^2+\|\vec{c}_{j_{x_2}}^{{x_2}}\|^2 - 2\langle \vec{c}_{j_{x_1}}^{{x_1}},\vec{t} \rangle-2\langle \vec{c}_{j_{x_2}}^{({x_2})},\vec{t}\rangle\right)\\
    =& \|\vec{t}\|^2+ \sum_{x\in[k]}\left(\|\vec{c}_{j_{x}}^{({x})}\|^2 - 2\langle \vec{c}_{j_{x}}^{({x})},\vec{t} \rangle\right)+  \sum_{\substack{(x_1,x_2)\in [k]^2\\ \text{ and }x_1<x_2 }}2\langle\vec{c}_{j_{x_1}}^{({x_1})},\vec{c}_{j_{x_2}}^{({x_2})}\rangle\\
    =& \|\vec{c}_{j_1}^{(1)}+\cdots + \vec{c}_{j_k}^{(k)}-\vec{t}\|^2 \geq \min_{\vec{z}\in \{0,1\}^n} \|\basis \vec{z}-\vec{t}\|^2.
    \end{align*}
    Here, the last equality follows from \cref{cor:euclidean-norm} and the inequality uses the fact that $\vec{c}_{j_1}^{(1)}+\cdots + \vec{c}_{j_k}^{(k)}= \basis^{(1)}\vec{z}_{1}+\cdots +\basis^{(k)}\vec{z}_{k}$ for some $\vec{z}_{1},\cdots, \vec{z}_{k}\in \{0,1\}^{n/k}$. Therefore, we get that if the $(0,1)$-$\CVP$ instance is a \textsf{NO} instance i.e. $\min_{\vec{z}\in \{0,1\}^n} \|\basis \vec{z}-\vec{t}\|> d$ then the reduced minimum weight-k-Clique instance is also a \textsf{NO} instance i.e. every $k$-clique in the graph $\mathcal{G}$ has weight more than $d^2$.

    Now, let's assume that the given $(0,1)$-$\CVP_2$ instance is a \textsf{YES} instance i.e. there exists a vector  $\vec{z}\in \{0,1\}^n$ such that $\|\basis \vec{z}-\vec{t}\|\leq d$. Let $\vec{c}_{j_i}^{(i)}=\vec{B}^{(i)}[z[{(i-1)\frac{n}{k}+1}], \ldots, z[{i\frac{n}{k}}]]^T$. Notice that $\basis \vec{z}=\sum_{i\in [k]}\vec{c}_{j_i}^{(i)}$. We claim that the $k$-Clique formed by the vertices ${v}_{j_1}^{(1)},\ldots, v_{j_k}^{(k)}$ has weight at most $d^2$. The weight of the $k$-clique is 
    \begin{align*}
    \sum_{\substack{i_1, i_2\in [k]^2\\ \text{and } i_1<i_2}} &w(e_{{i_1},j_{i_1}{i_2},j_{i_2}})\\
     =& \sum_{\substack{i_1, i_2\in [k]^2\\ \text{and } i_1<i_2}} 2\langle\vec{c}_{j_{i_1}}^{(i_1)},\vec{c}_{j_{i_2}}^{(i_2)}\rangle+\frac{1}{k-1}\left(\|\vec{c}_{j_{i_1}}^{(i_1)}\|^2+\|\vec{c}_{j_{i_2}}^{(i_2)}\|^2 - 2\langle \vec{c}_{j_{i_1}}^{(i_1)},\vec{t} \rangle-2\langle \vec{c}_{j_{i_2}}^{(i_2)},\vec{t}\rangle\right)+\left(\binom{k}{2}\right)^{-1}\|\vec{t}\|^2 \\
     =& \|\vec{t}\|^2+ \sum_{(i_1, i_2)\xleftarrow{2} [k]}2\langle\vec{c}_{j_{i_1}}^{(i_1)},\vec{c}_{j_{i_2}}^{(i_2)}\rangle+\frac{1}{k-1}\left(\|\vec{c}_{j_{i_1}}^{(i_1)}\|^2+\|\vec{c}_{j_{i_2}}^{(i_2)}\|^2 - 2\langle \vec{c}_{j_{i_1}}^{(i_1)},\vec{t} \rangle-2\langle \vec{c}_{j_{i_2}}^{(i_2)},\vec{t}\rangle\right)\\
     =& \|\vec{t}\|^2+ \sum_{i\in [k]} \left(\|\vec{c}_{j_{i}}^{(i)}\|^2 - 2\langle \vec{c}_{j_{i}}^{(i)},\vec{t} \rangle\right) + \sum_{\substack{i_1, i_2\in [k]^2\\ \text{and } i_1<i_2}}2\langle\vec{c}_{j_{i_1}}^{(i_1)},\vec{c}_{j_{i_2}}^{(i_2)}\rangle\\ 
    =& \|\vec{c}_{j_1}^{(1)}+\cdots+ \vec{c}_{j_k}^{(k)}-\vec{t}\|^2 = \|\basis \vec{z}-\vec{t}\|^2\leq d^2.
    \end{align*}
    This completes the proof.   
\end{proof}

\begin{remark}
    Our proof can be easily extend to reduces a $\gamma$-approximation of \emph{search} $(0,1)$-$\CVP$ to a $\gamma^2$-approximation of \emph{search} minimum-weight-$k$-Clique. 
\end{remark}

\subsection{Algorithms for (0,1)-CVP under the Euclidean norm}

We will use the following result about the algorithm for minimum-weight-triangle.

\begin{theorem}\cite[Theorem~3]{rwilliams08}\label{thm:algo-clique}
    For any positive integers $n,W$, there exists an $\tilde{\mathcal{O}}(Wn^{\omega})$ time algorithm for minimum-weight-triangle on graphs with $n$ vertices where edge weights are from $[-W,W]$
\end{theorem}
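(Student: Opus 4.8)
The plan is to follow Williams' approach of turning min-weight-triangle into a single integer matrix multiplication with ``short'' entries. First observe that the lightest triangle through a fixed edge $(i,k)\in E$ has weight $w_{ik}+\min_j\{w_{ij}+w_{jk}:(i,j),(j,k)\in E\}$, so it suffices to compute, for every ordered pair $(i,k)$, the quantity $t_{ik}:=\min_j\{w_{ij}+w_{jk}\}$ over common neighbours $j$ (setting $t_{ik}=+\infty$ when there is none), and then output $\min_{(i,k)\in E}(w_{ik}+t_{ik})$; for the decision version one compares this to $d$. Since the graph is simple, any common neighbour $j$ automatically differs from $i$ and $k$, so we really do enumerate triangles.

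The core step is a ``small-integer min-sum-with-multiplicities'' product. Shift every weight by $W$ so that it lies in $\{0,\dots,2W\}$, fix the base $Y:=n+1$, and define $n\times n$ integer matrices $A,B$ by $A_{ij}=Y^{\hat w_{ij}}$ if $(i,j)\in E$ and $0$ otherwise (with $\hat w$ the shifted weight), and likewise for $B$. Then $(AB)_{ik}=\sum_{j:(i,j),(j,k)\in E}Y^{\hat w_{ij}+\hat w_{jk}}$. Each exponent value $s\in\{0,\dots,4W\}$ is contributed by at most $n<Y$ indices $j$, so there are no carries in base $Y$: the base-$Y$ digit of $(AB)_{ik}$ at position $s$ counts the common neighbours $j$ with $\hat w_{ij}+\hat w_{jk}=s$. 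Hence $t_{ik}+2W$ is exactly the position of the least-significant nonzero digit of $(AB)_{ik}$, readable in $O(W\log n)$ time per entry; scanning all $n^2$ entries and combining with the edge weights $w_{ik}$ costs $\tilde{\mathcal{O}}(Wn^2)$.

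It remains to compute $AB$ quickly. Entries of $A,B$ have $O(W\log n)$ bits, hence so do the entries of $AB$ (a sum of at most $n$ products). To avoid running a fast matrix-multiplication routine on multiprecision integers, pick $r=O(W)$ primes $p_1,\dots,p_r$, each of $O(\log(nW))$ bits, whose product exceeds the largest possible value of an entry of $AB$; reduce $A,B$ modulo each $p_\ell$, run a fast matrix multiplication modulo each $p_\ell$ in $\tilde{\mathcal{O}}(n^\omega)$ time, and reconstruct $AB$ over $\Z$ entrywise by the Chinese Remainder Theorem. The total is $r\cdot\tilde{\mathcal{O}}(n^\omega)=\tilde{\mathcal{O}}(Wn^\omega)$, which dominates, giving the claimed bound. (Alternatively, run any bilinear matrix-multiplication scheme directly over $\Z$: it performs $n^{\omega+o(1)}$ ring operations, each on $O(W\log n)$-bit integers and thus costing $\tilde{\mathcal{O}}(W)$; this route also yields $\tilde{\mathcal{O}}(Wn^\omega)$.)

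The part that needs the most care is the bit-complexity accounting rather than any combinatorial idea: verifying that $Y>n$ makes the base-$Y$ encoding carry-free, that all integers involved stay $O(W\log n)$-bit throughout (including the intermediate values of whichever matrix-multiplication algorithm is invoked), and that the call to fast matrix multiplication costs only an extra $\tilde{\mathcal{O}}(W)$ factor and not some larger $\mathrm{poly}(W)$. The reduction to $2$-walk weights, the ``no self-loops'' argument for genuine triangles, and extracting least-significant digits are all routine.
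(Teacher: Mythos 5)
Your proof is correct and is exactly the standard argument behind the cited result (the paper itself only cites \cite{rwilliams08} and gives no proof): reduce min-weight triangle to a distance product over common neighbours, and compute that product via the carry-free base-$(n+1)$ exponent encoding followed by fast integer matrix multiplication, with the bit-complexity of the $O(W\log n)$-bit entries accounted for by CRT or by costing each ring operation at $\tilde{\mathcal{O}}(W)$. No gaps.
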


\iffalse
We will use the following straightforward reduction from minimum-weight-$k$-Clique for $k=3$ to the \emph{All-Pairs Shortest-Path} (APSP) problem: given a graph on $n$ vertices, return the distance between all pairs of nodes.

\begin{theorem}[folklore; see e.g. \cite{WilliamsW18}]\label{thm:clique-to-APSP}
For any integers $n,W>0$, there exists an $\mathcal{O}(n^2)$ time reduction from minimum-weight-$3$-Clique on graphs with $n$ vertices and edge weights bounded by $W$ to \textsf{APSP} on graphs with $n$ vertices and edge weights bounded by $W$.
\end{theorem}

For APSP with small integer weights, Seidel~\cite{seidel1995all} and Galil and Margalit~\cite{galil1997all} gave a subcubic-time algorithm with time complexity $\tilde{\mathcal{O}}(n^\omega)$. Note that, here $\omega$ is the matrix multiplication exponent. Recent work by Duan, Wu, and Zhou~\cite{duan2022faster} showed that $\omega \leq 2.371866$.

\begin{theorem}\cite{galil1997all}\label{thm:algo-APSP}
    For any positive integers $n,W$, there exists an $\tilde{\mathcal{O}}\left(W^{(\omega+1)/2}\cdot n^{\omega}\right)$  time algorithm for \textsf{APSP} on graph with $n$ vertices where edge weight is bounded by $W$.
\end{theorem}
\fi

We are now ready to prove our main algorithmic result, \cref{thm:main} from the introduction.

\begin{theorem}
    There is an algorithm for $(0,1)$-$\CVP$ that runs in time $ 2^{\omega n/3 +o(n)} \leq  \tilde{\mathcal{O}}((1.7299)^n)$ if the coordinates of the basis and target vectors are bounded by $2^{o(n)}$.
\end{theorem}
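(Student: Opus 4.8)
The plan is to instantiate the reduction of \cref{thm:CVPtoclique} with $k=3$ and then solve the resulting minimum-weight-triangle instance with the algorithm of \cref{thm:algo-clique}. Given a $(0,1)$-$\CVP_2$ instance of rank $n$ whose basis and target coordinates have absolute value at most $2^{\eta}$ with $\eta=o(n)$ (and ambient dimension $m$, which as usual we take to be $\poly(n)$, hence $2^{o(n)}$), \cref{thm:CVPtoclique} produces in $O(m\cdot\eta^2\cdot 2^{2n/3})$ time and space a graph $\mathcal{G}$ on $N=3\cdot 2^{\lceil n/3\rceil}=2^{n/3+O(1)}$ vertices, such that by the calculation in the proof of \cref{thm:CVPtoclique} the minimum total weight of a triangle in $\mathcal{G}$ equals $\min_{\vec{z}\in\{0,1\}^n}\|\basis\vec{z}-\vec{t}\|^2$ (every triangle has weight $\|\vec{c}_{j_1}^{(1)}+\vec{c}_{j_2}^{(2)}+\vec{c}_{j_3}^{(3)}-\vec{t}\|^2$, and each $\{0,1\}$-combination is realized), and the edge weights have absolute value at most $W:=O(m\cdot 2^{2\eta})=2^{o(n)}$.

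Two cosmetic adjustments make \cref{thm:algo-clique} applicable. First, the weights of \cref{thm:CVPtoclique} are rationals whose only non-integrality is the factors $\tfrac1{k-1}=\tfrac12$ and $\binom{k}{2}^{-1}=\tfrac13$; multiplying every edge weight, and the threshold $d^2$, by $6$ makes them integers while keeping them within $[-O(W),O(W)]$. Second, since inner products of integer lattice vectors can be negative, the weights lie in $[-W',W']$ for some $W'=O(W)$ rather than being nonnegative, which is exactly the regime of \cref{thm:algo-clique}. Applying that algorithm computes the minimum triangle weight in $\tilde{\mathcal{O}}(W'N^{\omega})=\tilde{\mathcal{O}}\bigl(2^{o(n)}\cdot 2^{\omega n/3}\bigr)=2^{\omega n/3+o(n)}$ time; comparing it with $6d^2$ decides the $(0,1)$-$\CVP_2$ instance, and, by the remark following \cref{thm:CVPtoclique}, recovering the optimal triangle yields the optimal $\{0,1\}$-combination, giving the exact/search version as well. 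When $n$ is not a multiple of $3$ we pad with at most two all-zero basis vectors, which changes nothing. Substituting the bound $\omega<2.371866$ of \cite{duan2022faster} gives $2^{\omega n/3}\le\tilde{\mathcal{O}}((1.7299)^n)$. Finally, the $(0,1)$-$\SVP$ statement follows from the reduction recalled in \cref{sec:prelim} from $(0,1)$-$\SVP$ on rank $n$ to $n$ instances of $(0,1)$-$\CVP$ on rank $n-1$, at the cost of a $\poly(n)$ factor absorbed into $2^{o(n)}$.

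Since every ingredient is already in hand, there is no genuine obstacle; the only points needing (mild) care are \textbf{(i)} verifying that the weight magnitude $W$ stays $2^{o(n)}$, which hinges on both $m$ and $\eta$ being $2^{o(n)}$, and \textbf{(ii)} clearing the two constant denominators and permitting negative weights so that \cref{thm:algo-clique} applies verbatim. One could alternatively route through weighted \textsf{Max}-$2$-$\SAT$ via \cref{thm:cvp-to-sat} and invoke the algorithm of \cite{williams2005new} for the same bound, but the triangle route above is the most direct and is the one we present.
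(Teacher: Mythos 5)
Your proposal is correct and follows essentially the same route as the paper: apply \cref{thm:CVPtoclique} with $k=3$ to obtain a minimum-weight-triangle instance on $N=3\cdot 2^{\lceil n/3\rceil}$ vertices with weights of magnitude $O(m\cdot 2^{2\eta})=2^{o(n)}$, then solve it with \cref{thm:algo-clique} in $\tilde{\mathcal{O}}(WN^{\omega})=2^{\omega n/3+o(n)}$ time. Your extra remarks about clearing the constant denominators $\tfrac12$ and $\tfrac13$, handling negative weights, and padding when $3\nmid n$ are correct housekeeping details that the paper leaves implicit.
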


\begin{proof}
Suppose we are given a $(0,1)$-$\CVP$ instance with basis $\basis=(\vec{b}_1,\ldots \vec{b}_n)\in \Z^{m\times n} $ and target $\vec{t}\in \Z^m$ where each coordinate is bounded by $2^{\eta}$. By applying \cref{thm:CVPtoclique}, we get an instance of minimum-weight-triangle on a graph with $N=3\cdot 2^{n/3}$ vertices and edge weights bounded by $O(m\cdot 2^{2\eta})$. Then, from \cref{thm:algo-clique}, we get an $\tilde{\mathcal{O}}(m\cdot 2^{2\eta} \cdot (3\cdot 2^{n/3})^{\omega})=\tilde{\mathcal{O}}(m\cdot 2^{2\eta}\cdot 2^{\omega n/3})$ time algorithm. Let $\eta=o(n)$ and for every $(0,1)$-$\CVP$ instance $m$ is considered to be polynomial in $n$. Hence, we get an $\tilde{\mathcal{O}}\left(2^{\omega n/3+o(n)}\right) \leq \tilde{\mathcal{O}}((1.7299)^n)$ time algorithm for $(0,1)$-$\CVP$ if the coordinates of the basis vectors and of the target are bounded by $2^{o(n)}$. 
\end{proof}

By using the reduction from $(0,1)$-$\SVP$ to $(0,1)$-$\CVP$, we also get a $ 2^{\omega n/3 +o(n)} \leq  \tilde{\mathcal{O}}((1.7299)^n)$ time algorithm for $(0,1)$-$\SVP$ if the coordinates of the basis vectors are bounded by $2^{o(n)}$.

\iffalse
Let $W$ be the largest absolute value of edge weights. We add $W$ to each edge weight and reduce it to an instance of minimum-weight-$3$-Clique with positive edge weights.   Then, applying \cref{thm:clique-to-APSP}, we get an instance of \textsf{APSP} on a graph with $N=3\cdot 2^{n/3}$ vertices and edge weights bounded by $2W\leq O(m\cdot 2^{2\eta})$. Therefore, from \cref{thm:algo-APSP}, we get an $\tilde{\mathcal{O}}\left(W^{(\omega+1)/2}\cdot N^{\omega}\right)=\tilde{\mathcal{O}}\left((m\cdot 2^\eta)^{(\omega+1)/2}\cdot 2^{\omega n/3 }\right)$ time algorithm. Let $\eta=o(n)$ and for every $(0,1)$-$\CVP$ instance $m$ is considered to be polynomial in $n$. Hence, we get an $\tilde{\mathcal{O}}\left(2^{\omega n/3+o(n)}\right) \leq \tilde{\mathcal{O}}((1.7299)^n)$ algorithm for $(0,1)$-$\CVP$ if the coordinates of the basis vectors and of the target are bounded by $2^{o(n)}$.  
\fi

\section{From CVP to SAT, and the Equivalence}

In this section, we present a Karp reduction from $(0,1)$-$\CVP_p$ to Weighted Max-$p$-SAT for all $p\in 2\Z_{>0}$, proving ~\cref{thm:cvp-to-sat}.

\ThmCVPtoSAT*
\begin{proof}
Given a $(0,1)$-$\CVP_p$ instance with basis $\basis=(\vec{b}_1,\ldots, \vec{b}_n)\in \Z^{m\times n}$, target $\vec{t}\in \Z^{m}$ and a number $d>0$, we construct a Max-$p$-SAT instance on $n$ variables. 

Let $\mathcal 
{X}=\{x_1,\ldots,x_n\}$ be the set of variables in the formula we construct. For ease of the description, we will also use one more variable $x_{n+1}$ and set $x_{n+1}=1$. In this reduction we will also use $\vec{b}_{n+1}=\vec{t}$. For any tuple of $p$ elements $\mathcal{S}\subset\{X\cup x_{n+1}\}^p$, let $\sigma(\mathcal{S})$ denote the number of occurrences of $x_{n+1}$ in  $\mathcal{S}$ and $\delta(S)$ denote the number of distinct elements in $S$. We create the formula $\Psi$ consisting of the following weighted clauses:
\begin{align*}
  \forall (x_{i_1},\ldots,x_{i_p}) 
  \in \{\mathcal{X} \cup x_{n+1}\}^p, \; &\text{let } k=\delta(x_{i_1},x_{i_2},\ldots,x_{i_p}) \text{ and }x_{i'_1},\ldots, x_{i'_k} \text{ be the set of} \\
  &\text{all  distinct elements from }\{x_{i_1},x_{i_2},\ldots, x_{i_p}\} \\
  &C^0_{i_1,
  \ldots,i_p}:= x_{i'_1} \lor x_{i'_2} \cdots \lor x_{i'_k},\\
  \;&C^1_{i_1,
  \ldots,i_p}:= \overline{x_{i'_1}} \lor x_{i'_2} \cdots \lor x_{i'_k},\\
  \;&C^2_{i_1,
  \ldots,i_p}:= {x_{i'_1}} \lor \overline{x_{i'_2}} \cdots \lor x_{i'_k},\\
  \;&C^3_{i_1,
  \ldots,i_p}:= \overline{x_{i'_1}} \lor \overline{x_{i'_2}} \cdots \lor x_{i'_k},\\
  &\hspace{2em}\vdots\\
  \;&C^{2^k-1}_{i_1,
  \ldots,i_p}:= \overline{x_{i'_1}} \lor \overline{x_{i'_2}} \cdots \lor \overline{x_{i'_k}},\\
  \text{ with weight  }   \forall 0\leq j\leq (2^{k}-2),\;\; &w(C^{j}_{i_1,\ldots,i_p}):=\frac{D}{2^k-1}-\frac{1}{2^{k}-1}\cdot (-1)^{\sigma(\{x_{i_1},\ldots,x_{i_p}\})} \textsf{mvp}(\vec{b}_{i_1},\dots,\vec{b}_{i_p})\\
  \text{ and } w(C^{2^k-1}_{i_1,\ldots,i_p})&:=\frac{D}{2^k-1}+\left(\frac{2^{k}-2}{2^{k}-1}\right)\cdot (-1)^{\sigma(\{x_{i_1},\ldots,x_{i_p}\})} \cdot \textsf{mvp}(\vec{b}_{i_1},\dots,\vec{b}_{i_p})
\end{align*}

It is easy to see that the reduction takes $O(n^p\cdot \poly(n,m,p))$ time and space. Now we will prove the correctness of the reduction.  Observe that for any $(x_{i_1},\ldots,x_{i_p})\in \{\mathcal{X}\cup x_{n+1}\}^p$, any assignment to $\{\mathcal{X} \cup x_{n+1}\}$ variables will satisfy exactly $2^{k}-1$ clauses from $\{C^0_{i_1,
  \ldots,i_p}, C^1_{i_1,
  \ldots,i_p}, \ldots, C^{2^{k}-1}_{i_1,
  \ldots,i_p} \}$. Notice that, 
 \text{if }  $x_{i'_1}=x_{i'_2}=\cdots =x_{i'_k}=1$, \text{then the total weight of satisfied clauses from} $\{C^0_{i_1,
  \ldots,i_p}, C^1_{i_1,
  \ldots,i_p}, \ldots, C^{2^{k}-1}_{i_1,
  \ldots,i_p} \}$ is $D- (-1)^{\sigma(\{x_{i_1},\ldots,x_{i_p}\})} \cdot \textsf{mvp}(\vec{b}_{i_1},\dots,\vec{b}_{i_p})$  \text{and for the rest of the assignments the total weight is} $D$.
  Our correctness proof essentially relies on this observation.

First, we will show that if the given $(0,1)$-$\CVP_p$ instance is a \textsf{YES} instance, $\min_{\vec{z}\in \{0,1\}^n} \|\basis \vec{z}-\vec{t}\|_p\leq d$ , then there exists an assignment to the variables $\mathcal{X}$ that satisfies clauses of $\Psi$ of weight at least $(n+1)^p \cdot D-d^p$ . Let $\vec{z}\in \{0,1\}^n$ be a vector achieving $\|\basis \vec{z}-\vec{t}\|_p\leq d$.  Let $\rho$ be an assignment to the variables in $\mathcal{X}$ where $\forall i \in [n], \rho(x_i) = z_i$, and recall that we always set $\rho(x_{n+1}) = 1$. For any clause $C$, we will use the same notation $\rho(C)$ to indicate whether it satisfies the clause or not, i.e., $\rho(C) = 1$ if the clause $C$ is satisfied by the assignment $\rho$, and $\rho(C) = 0$ otherwise. The total weight of clauses satisfied by the assignment $\rho$ is  
\begin{align*}
\sum_{\substack{(i_1,\dots, i_p)\\ \in [n+1]^p}} &\sum_{\substack{j=0\\ \delta=\delta(x_{i_1},\ldots,x_{i_p})}}^{2^\delta-1} \rho(C^j_{i_1,\ldots,i_p})  w(C^j_{i_1,\ldots,i_p})\\ =&\sum_{\substack{(i_1,\dots, i_p) \in [n+1]^p\\ \text{ and } \rho(x_{i_1})=\cdots \rho(x_{i_p})=1}}  D- (-1)^{\sigma({x_{i_1},\ldots,x_{i_p}})}\mathsf{mvp}(\vec{b}_{i_1},\ldots, \vec{b}_{i_p})
    +\sum_{\substack{(i_1,\dots, i_p) \in [n+1]^p\\ \text{ and } \exists l\in [p] \text{ s.t. } \rho(x_{i_l})\neq 1}}  D\\
=&(n+1)^p\cdot  D -\sum_{\substack{(i_1,\dots, i_p) \in [n+1]^p\\ \text{ and } z_{i_1}=\cdots z_{i_p}=1}}(-1)^{\sigma({x_{i_1},\ldots,x_{i_p}})}\mathsf{mvp}(\vec{b}_{i_1},\ldots, \vec{b}_{i_p})\\
=&(n+1)^p\cdot  D-\|\basis \vec{z}-\vec{b}_{n+1}\|_p^p =
(n+1)^p\cdot  D-\|\basis \vec{z}-\vec{t}\|_p^p
\geq (n+1)^p\cdot  D-d^p.
\end{align*}

Here the first equality follows from the observation mentioned in  above paragraph and the third equality follows from the \cref{cor:simple-mvp}.

Now, we will show that if the given $(0,1)$-$\CVP_p$ instance is a \textsf{NO} instance, $\min_{\vec{z}\in \{0,1\}^n} \|\basis \vec{z}-\vec{t}\|_p>d$, then all assignments to the variables $\mathcal{X}$ satisfy clauses of $\Psi$ of weight less than $(n+1)^p\cdot D-d^p$. For the sake of contradiction, let's assume that an assignment $\rho$ satisfies clauses of weight greater than equal to $(n+1)^p\cdot (2^p-1)D-d^p$. Recall that we have already fixed $\rho({x_{n+1}})=1$.

\begin{align*}
\sum_{\substack{(i_1,\dots, i_p)\\ \in [n+1]^p}} \sum_{\substack{j=0\\ \delta=\delta(x_{i_1},\ldots,x_{i_p})}}^{2^\delta-1} \rho(C^j_{i_1,\ldots,i_p}) & w(C^j_{i_1,\ldots,i_p})\\ =&\sum_{\substack{(i_1,\dots, i_p) \in [n+1]^p\\ \text{ and } \rho(x_{i_1})=\cdots \rho(x_{i_p})=1}}  D- (-1)^{\sigma({x_{i_1},\ldots,x_{i_p}})}\mathsf{mvp}(\vec{b}_{i_1},\ldots, \vec{b}_{i_p})\\
& +\sum_{\substack{(i_1,\dots, i_p) \in [n+1]^p\\ \text{ and } \exists l\in [p] \text{ s.t. } \rho(x_{i_l})\neq 1}}  D\\
=&(n+1)^p\cdot D -\sum_{\substack{(i_1,\dots, i_p) \in [n+1]^p\\ \text{ and } z_{i_1}=\cdots z_{i_p}=1}}(-1)^{\sigma({x_{i_1},\ldots,x_{i_p}})}\mathsf{mvp}(\vec{b}_{i_1},\ldots, \vec{b}_{i_p})\\
=&(n+1)^p\cdot  D-\left\|\sum_{i=1}^n \rho(x_i)\vec{b_i}-\rho(x_{n+1})\vec{b}_{n+1}\right\|_p^p\\
=&(n+1)^p\cdot D-\left\|\sum_{i=1}^n \rho(x_i)\vec{b_i}-\vec{t}\right\|_p^p\\
\geq & (n+1)^p\cdot  D-\left(\min_{\vec{z}\in \{0,1\}^n} \|\basis \vec{z}-\vec{t}\|_p\right)^p\\ \\
>&(n+1)^p\cdot  D-d^p.
\end{align*}
This gives a contradiction. Hence, it also completes the proof. 
\end{proof}

We recall the following result by Aggarwal, Bennett, Golovnev, and Stephens-Davidowitz \cite{ABGS19}.

\begin{theorem}(\cite[Theorem~3.2]{ABGS19} and \cite[Theorem 3.2]{BGS17})\label{thm:sat-to-cvp}
For any positive integer $n$ and even integer $p$, there exists a Karp reduction from Max-$p$-$\SAT$ on $n$ variables to $(0,1)$-$\CVP_p$ on lattices of rank $n$.
\end{theorem}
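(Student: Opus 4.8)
\noindent The plan is to realize the Max-$p$-$\SAT$ objective as an $\ell_p$ distance. Given a Max-$p$-$\SAT$ instance --- a formula $\Psi$ on variables $x_1,\dots,x_n$ with $m$ clauses and acceptance fraction $\delta$ --- I would output a single $(0,1)$-$\CVP_p$ instance $(\basis,\vec t,d)$ with $\basis\in\Z^{M\times n}$ of full column rank $n$ (so the lattice has rank exactly $n$), $M=\poly(n,m)$, all entries of $\basis$ and $\vec t$ bounded, and
\[
\|\basis\vec z-\vec t\|_p^p \;=\; p!\cdot U(\vec z)\;+\;\beta\qquad\text{for every }\vec z\in\{0,1\}^n,
\]
where $U(\vec z)$ is the number of clauses of $\Psi$ falsified by $\vec z$ and $\beta\in\Z$ is a fixed constant. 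With $d:=\bigl(p!\,(1-\delta)m+\beta\bigr)^{1/p}$ this is a \textsf{YES} instance of $(0,1)$-$\CVP_p$ exactly when some assignment falsifies at most $(1-\delta)m$ clauses, i.e.\ when $\Psi$ is a \textsf{YES} instance of Max-$p$-$\SAT$; the map is clearly many-one and poly-time. (Since $\|\basis\vec z-\vec t\|_p^p\in\Z_{\ge0}$, one may round $d$ to the $p$-th root of an integer.)

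\noindent To build $\basis,\vec t$ I would match coefficient polynomials. First, $U(\vec z)=\sum_{C}\prod_{\ell\in C}\mathbf 1[\ell\text{ is false under }\vec z]$, and each $\mathbf 1[\ell\text{ false}]$ is an affine form ($1-z_i$ or $z_i$); hence, reducing $z_i^2=z_i$, $U$ is a multilinear polynomial of degree $\le p$ whose coefficient $c_S$ of $\prod_{i\in S}z_i$ ($S\subseteq[n]$, $|S|\le p$) is an integer with $|c_S|\le m$. Second, appending $\vec b_{n+1}:=\vec t$ and putting $z_{n+1}:=1$, \cref{cor:simple-mvp} (a form of \cref{lem:mvp}) gives, for \emph{any} integral $\basis,\vec t$,
\[
\|\basis\vec z-\vec t\|_p^p=\sum_{(i_1,\dots,i_p)\in[n+1]^p}(-1)^{\sigma}\,z_{i_1}\cdots z_{i_p}\,\mvp(\vec b_{i_1},\dots,\vec b_{i_p}),
\]
with $\sigma$ the number of occurrences of $n+1$ in the tuple; on the cube this too is multilinear of degree $\le p$, the coefficient of $\prod_{i\in S}z_i$ depends only on the columns indexed by $S\cup\{n+1\}$, and a coordinate (row) supported on a set of size $<p$ affects no degree-$p$ coefficient. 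Using this triangular structure I would fix coefficients top-down: a $\{-1,0,1\}$-valued coordinate supported on a $p$-set $S$ changes only the coefficient of $\prod_{i\in S}z_i$ among degree-$p$ terms, by exactly $\pm p!$ (the sign being the product of its entries), so a batch of such coordinates matches every degree-$p$ coefficient to its target $p!\,c_S$; one then descends to degrees $p-1,\dots,1$ and the constant, at each level using coordinates of the appropriate smaller support --- possibly with a nonzero target entry, i.e.\ using the index $n+1$ --- and choosing $\pm1$ patterns to hit the required integers. Including the $n$ unit coordinates $\vec e_1,\dots,\vec e_n$ (absorbed in the degree-$1$ step) forces full column rank $n$; in total this uses $\poly(n,m)$ bounded coordinates.

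\noindent Evenness of $p$ enters only through \cref{cor:simple-mvp}/\cref{lem:mvp}: it is exactly the fact that, for even $p$, $\|\cdot\|_p^p$ splits into $p$-wise contributions, which is what allows it to be equated with the degree-$\le p$ polynomial $U$; no such splitting exists for $p\notin2\Z$, matching the known SETH-hardness of $\CVP_p$ there. The step I expect to require the most care --- and which is what \cite[Theorem~3.2]{BGS17} and \cite[Theorem~3.2]{ABGS19} carry out in detail --- is checking that the top-down matching never stalls, i.e.\ that at every degree level every required integer is reachable by a positive-weight sum of $p$-th powers of integral affine forms; for instance the degree-$1$ coefficient of $z_i$ can be incremented by $1$ via the coordinate $\vec e_i$ and decremented by $1$ via a suitable coordinate supported on $\{i,n+1\}$, and similar elementary computations handle all levels. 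Everything else is a routine translation between clauses, lattice coordinates, and the radius; the same construction also gives the search-version reduction, reading off a falsification-minimizing assignment from a closest lattice vector.
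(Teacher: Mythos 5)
You have correctly identified that the paper itself does not prove this statement: it is imported from \cite{BGS17} and \cite{ABGS19}, where the argument is a \emph{per-clause gadget} construction. There, one builds a $(p,p)$-isolating parallelepiped --- a small block $(\mat{V},\vec u)$ with $\|\mat{V}\vec y-\vec u\|_p^p$ taking one value when all $p$ literals of a clause are false and a common different value otherwise --- and concatenates one block per clause, so that $\|\basis\vec z-\vec t\|_p^p$ becomes an affine function of the number of satisfied clauses. Your route is genuinely different: a single global identity $\|\basis\vec z-\vec t\|_p^p=p!\,U(\vec z)+\beta$ enforced by matching multilinear coefficients top-down in degree. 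The degree-$p$ step is correct (a $\pm1$-row supported on a $p$-set shifts exactly one degree-$p$ coefficient, by $\pm p!$), and the triangular structure you invoke (a row supported on $S$ only affects monomials over subsets of $S$) is valid.

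The gap is the lower-degree levels, which you explicitly defer to the cited papers --- and that deferred step \emph{is} the theorem. At level $q<p$ you must realize an arbitrary residual integer as a nonnegative-integer combination of the increments achievable by integral rows supported on $q$-sets, and neither the sign condition nor the divisibility condition is automatic. Concretely, for $p=4$ and $q=2$: the row $(z_i+z_j-a)^4$ contributes $(2-a)^4-2(1-a)^4+a^4=12a^2-24a+14>0$ to the coefficient of $z_iz_j$ for every integer $a$, so decrements are only available from mixed-sign rows such as $(z_i-z_j-a)^4$, which contributes $-12a^2-2<0$; moreover, since $x^4\equiv x\pmod 2$, \emph{every} integral row contributes an even amount to every degree-$2$ coefficient, so one must separately verify that the residuals left after the degree-$4$ and degree-$3$ corrections are always even. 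These checks happen to succeed for $p=4$, but for general even $p$ one has to determine, for each $1\le q<p$, the subgroup of $\Z$ generated by the achievable per-row increments, prove it contains all the residual targets, and prove that increments of both signs exist --- which is essentially the isolating-parallelepiped existence proof of \cite{ABGS19} in different clothing, not a ``routine translation.'' As written, the proposal reduces the cited theorem to itself at its hardest point; to stand alone it needs an explicit argument (e.g., via finite differences of $x^p$ and the sign patterns of the resulting polynomials in the shift $a$) that the top-down matching never stalls.
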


\remark In \cite{ABGS19}, Aggarwal, Bennett, Golovnev, and Stephens-Davidowitz show constructions of $(p,p)$-isolating parallelepiped. Combining this with Theorem 3.2 of \cite{BGS17} gives the above theorem. The result is more general, but we write a specific part of it which is required in this paper.

The proof of \cref{cor:equiv} in the introduction now follows directly from \cref{thm:cvp-to-sat} and \cref{thm:sat-to-cvp}.

\paragraph{A Karp reduction from SVP to SAT:} In \cite{GMSS99}, Goldreich, Micciancio, Safra, and Seifert presented a polynomial-time reduction from $\SVP_p$ on an $n$-rank lattice to $\CVP_p$ on an $n$-rank lattice. The reduction involves making $n$ calls to $\CVP_p$. By a trivial modification to this reduction, we also get a polynomial time reduction from $(0,1)$-$\SVP_p$ to $(0,1)$-$\CVP_p$. It will also require $n$ calls to $(0,1)$-$\CVP_p$. By combining this result with \cref{thm:cvp-to-sat}, we obtain a Turing reduction from $(0,1)$-$\SVP_p$ to Max-$p$-$\SAT$ that requires $n$ calls to Max-$p$-$\SAT$.

Now, the question arises: ``Can we obtain a  Karp reduction from $\SVP_p$ to $\SAT$?'' Using a similar idea as the previous reduction, we can achieve a Karp reduction from $(0,1)$-$\SVP_p$ to Max-$p$-$\SAT$ that has a factor $2$ blowup. Specifically, for any $p\in 2\Z_{>0}$, we can devise a polynomial-time reduction from $(0,1)$-$\SVP_p$ on an $n$-rank lattice to Max-$p$-$\SAT$ on $2n$ variables. The main challenge in this reduction is to ensure that the Max-$p$-$\SAT$ solver produces a non-zero solution. This can be accomplished by introducing a single clause $x_1\lor x_2\lor \cdots \lor x_n$ with a high weight. Additionally, we will need $n$ additional variables to convert this clause into $k$-$\SAT$ formulas.

\section{From CVP in even norms to $k$-Clique on Hypergraphs}
In this section, we present a Karp-reduction from $(0,1)$-$\CVP_p$ to minimum-weight-$k$-Clique on $p$-uniform hypergraph. It is an extension of the reduction given in \cref{sec:cvp-to-clique} to any $\ell_p$ norm for any even integer $p\geq 2$.

We will use the following corollary of \cref{lem:mvp}. 

\begin{restatable}{corollary}{CorEvenNorm}\label{cor:even-norm}
	For any $p\in 2\Z_{>0}$ and vectors $\vec{v}_1,\ldots, \vec{v}_k,\vec{v}_{k+1}$, we have 
	\[\|\vec{v}_1+\cdots+\vec{v}_k-\vec{v}_{k+1}\|_p^p=\sum_{\substack{(i_1,\ldots,i_p)\in[k]^p\\ \text{and }i_1<i_2\ldots <i_p  }}\sum\limits_{ \substack{\mathcal{X}=(l_1,\ldots,l_p) \\ \in (i_1,\ldots,i_p, k+1\}^p} } (-1)^{\sigma(\mathcal{X})}\cdot \frac{1}{\beta(k,p,\mathcal{X})}
\cdot  \textbf{mvp}\left(\vec{v}_{{l_1}},\ldots, \vec{v}_{{l_p}}\right),\]
	where $\sigma(\mathcal{X})$ denotes the number of occurrences of $k+1$ in the tuple $\mathcal{X}\in [k+1]^p $, $\beta(k,p,\mathcal{X})=\binom{k-\sigma'(\mathcal{X})}{p-\sigma'(\mathcal{X})}$, and $\sigma'(\mathcal{X})$ denotes the number of distinct elements except $k+1$ in the tuple $\mathcal{X}\in [k+1]^p$.
\end{restatable}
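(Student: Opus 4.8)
The plan is to derive \cref{cor:even-norm} directly from \cref{lem:mvp} by starting with the full expansion
\[
\|\vec{v}_1+\cdots+\vec{v}_k-\vec{v}_{k+1}\|_p^p=\sum_{(i_1,\ldots,i_p)\in [k+1]^p}(-1)^{\sigma(i_1,\ldots,i_p)}\,\mvp(\vec{v}_{i_1},\ldots,\vec{v}_{i_p}),
\]
which is exactly \cref{cor:simple-mvp} applied to the $(k+1)$-tuple of vectors (with the roles of the minus sign and the index $k+1$ matching). The task is then purely combinatorial: regroup the $(k+1)^p$ ordered tuples appearing on the right-hand side so that each term is attributed to a unique \emph{sorted} support tuple $(i_1,\ldots,i_p)\in[k]^p$ with $i_1<\cdots<i_p$, together with a pattern $\mathcal{X}\in\{i_1,\ldots,i_p,k+1\}^p$ recording which of the chosen base indices (or $k+1$) occupies each of the $p$ slots.

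The key steps I would carry out in order are as follows. First I would fix an arbitrary ordered tuple $(l_1,\ldots,l_p)\in[k+1]^p$ and identify its multiset of values among $\{1,\ldots,k\}$; let $\sigma'$ be the number of \emph{distinct} such values (i.e. distinct entries other than $k+1$) and $\sigma$ the number of entries equal to $k+1$. The term $(-1)^{\sigma}\mvp(\vec{v}_{l_1},\ldots,\vec{v}_{l_p})$ depends only on the pattern $\mathcal{X}=(l_1,\ldots,l_p)$ viewed relative to an arbitrary increasing relabelling of its $\sigma'$ distinct non-$(k+1)$ values. Second, I would count: how many sorted tuples $(i_1,\ldots,i_p)$ with $i_1<\cdots<i_p$ in $[k]$ contain a given set of $\sigma'$ distinct values as a sub(multi)set of their entries? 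Since the $\sigma'$ forced values must be among the $p$ slots of the sorted tuple, the remaining $p-\sigma'$ slots are filled by any $(p-\sigma')$-subset of the remaining $k-\sigma'$ elements of $[k]$, giving exactly $\binom{k-\sigma'}{p-\sigma'}=\beta(k,p,\mathcal{X})$ choices. Third, I would conclude that in the double sum on the right-hand side of \cref{cor:even-norm}, the term $\mvp(\vec{v}_{l_1},\ldots,\vec{v}_{l_p})$ corresponding to a fixed ordered $(l_1,\ldots,l_p)\in[k+1]^p$ is produced exactly $\beta(k,p,\mathcal{X})$ times (once for each sorted support tuple that realises it), each time with coefficient $(-1)^\sigma/\beta(k,p,\mathcal{X})$, so the contributions sum to precisely $(-1)^\sigma\mvp(\vec{v}_{l_1},\ldots,\vec{v}_{l_p})$, matching the left-hand-side expansion term by term.

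Finally I would assemble these observations into a clean chain of equalities, exactly mirroring the (currently commented-out) computation in the draft: expand via \cref{lem:mvp}/\cref{cor:simple-mvp}, partition the index set $[k+1]^p$ according to the distinct non-$(k+1)$ values used, rewrite each block as a sum over a sorted support tuple times an inner sum over patterns, and insert the $1/\beta$ normalisation so that the overcounting cancels. The main obstacle I anticipate is purely bookkeeping: making the pattern set $\{i_1,\ldots,i_p,k+1\}^p$ and the quantities $\sigma(\mathcal{X})$, $\sigma'(\mathcal{X})$, $\beta(k,p,\mathcal{X})$ behave consistently under the relabelling that identifies a pattern across different sorted support tuples, and verifying the edge cases where $\sigma'=p$ (the pattern uses all distinct base indices, so $\beta=\binom{k-p}{0}=1$ and there is no overcounting) and where $\sigma=p$ (all slots equal $k+1$, giving the single term $(-1)^p\|\vec{v}_{k+1}\|_p^p=\mvp(\vec{v}_{k+1},\ldots,\vec{v}_{k+1})$, attributed with weight $1/\binom{k}{0}$ across... here one must be careful, since every sorted support tuple can host this all-$(k+1)$ pattern, so it is counted $\binom{k}{p}$ times with coefficient $1/\binom{k}{p}$ each). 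Checking that each such extreme case telescopes correctly is where I would spend the most care, but no step is conceptually deep.
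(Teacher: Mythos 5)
Your proposal is correct and follows essentially the same route as the paper's proof in \cref{appendix:cor-even-norm}: expand via \cref{lem:mvp}, then observe that each ordered tuple $(l_1,\ldots,l_p)\in[k+1]^p$ with $\sigma'$ distinct non-$(k+1)$ entries is hosted by exactly $\binom{k-\sigma'}{p-\sigma'}$ sorted support tuples, so the $1/\beta$ factor cancels the overcounting. The paper organizes the same double-counting argument as a chain of equalities grouped by the value of $\sigma'(\mathcal{X})$, whereas you argue term-by-term multiplicity, but the underlying identity and the edge-case checks (including the all-$(k+1)$ pattern counted $\binom{k}{p}$ times) are identical.
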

 We defer the proof to \cref{appendix:cor-even-norm}.

\begin{theorem}\label{thm:CVP-even-norm-to-clique}
  For any  positive integers $n,k\geq 2$, and $p\in 2\Z_{>0}$ satisfying $p\leq k\leq n$, there is a Karp-reduction from $\{0,1\}$-$\CVP_p$ on lattices of rank $n$ to minimum-weight-$k$-Clique on $p$-uniform hypergraphs on $N=k\cdot 2^{\lceil n/k\rceil}$ vertices \\ Furthermore, the reduction takes only $O(\poly(n,m) \cdot (k^2\cdot 2^{n/k})^p)$ time and space.
\end{theorem}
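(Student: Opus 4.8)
The plan is to mimic the proof of \cref{thm:CVPtoclique} exactly, replacing the quadratic identity of \cref{cor:euclidean-norm} with its degree-$p$ analogue \cref{cor:even-norm}, and replacing pairwise edge weights of an ordinary graph by $p$-wise hyperedge weights of a $p$-uniform hypergraph. First I would set up the split-and-list structure: assume $k \mid n$ for simplicity, partition the $n$ basis vectors into $k$ blocks $\basis^{(1)},\dots,\basis^{(k)}$ of size $n/k$, and for each block form the list $\mathbf{C}^{(i)} = [\vec c^{(i)}_1,\dots,\vec c^{(i)}_{2^{n/k}}]$ of all $\{0,1\}$-subset-sums of that block, with one vertex $v^{(i)}_j$ per column. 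The vertex set is the disjoint union of these $k$ blocks, so the hypergraph is $k$-partite in the sense that it has $k$ vertex classes; the hyperedges are all $p$-subsets of vertices that lie in $p$ \emph{distinct} classes. (Here I must use $p \le k$, which is exactly the hypothesis, so that a $k$-clique actually contains $\binom{k}{p}$ hyperedges touching every class.)

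Next I would define the hyperedge weights. Fix a hyperedge on vertices $v^{(x_1)}_{j_{x_1}},\dots,v^{(x_p)}_{j_{x_p}}$ with $x_1<\dots<x_p$. By \cref{cor:even-norm} applied with the $k$ ``slot'' vectors being the chosen columns $\vec c^{(1)}_{j_1},\dots,\vec c^{(k)}_{j_k}$ and $\vec v_{k+1}=\vec t$, the quantity $\|\vec c^{(1)}_{j_1}+\dots+\vec c^{(k)}_{j_k}-\vec t\|_p^p$ decomposes as a sum, over all size-$p$ increasing tuples $(i_1,\dots,i_p)\in[k]^p$, of a term depending only on the columns indexed by $i_1,\dots,i_p$ (together with $\vec t$). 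So the natural definition is: the weight of the hyperedge on classes $x_1<\dots<x_p$ is precisely the inner sum from \cref{cor:even-norm} indexed by $(i_1,\dots,i_p)=(x_1,\dots,x_p)$, i.e.
\[
w(e) := \sum_{\substack{\mathcal{X}=(l_1,\dots,l_p)\\ \in (x_1,\dots,x_p,k+1)^p}} (-1)^{\sigma(\mathcal{X})}\cdot \frac{1}{\beta(k,p,\mathcal{X})}\cdot \textbf{mvp}\bigl(\vec u_{l_1},\dots,\vec u_{l_p}\bigr),
\]
where $\vec u_{x_a}:=\vec c^{(x_a)}_{j_{x_a}}$ and $\vec u_{k+1}:=\vec t$. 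Summing $w(e)$ over all $\binom{k}{p}$ hyperedges of a $k$-clique then reproduces, term by term, the double sum in \cref{cor:even-norm}, hence equals $\|\vec c^{(1)}_{j_1}+\dots+\vec c^{(k)}_{j_k}-\vec t\|_p^p$. Since $\vec c^{(1)}_{j_1}+\dots+\vec c^{(k)}_{j_k}$ ranges exactly over $\{\basis\vec z:\vec z\in\{0,1\}^n\}$ as the clique ranges over all $k$-cliques, the minimum clique weight equals $\min_{\vec z\in\{0,1\}^n}\|\basis\vec z-\vec t\|_p^p$. Setting the clique-weight threshold to $d^p$ finishes the correctness argument: \textsf{YES} instances of $(0,1)$-$\CVP_p$ give a $k$-clique of weight $\le d^p$, and \textsf{NO} instances force every $k$-clique to have weight $>d^p$. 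I would handle the case $k\nmid n$ by padding with $\lceil n/k\rceil$-size blocks (adding zero vectors), which only changes $2^{n/k}$ to $2^{\lceil n/k\rceil}$, matching the stated $N=k\cdot2^{\lceil n/k\rceil}$.

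For the complexity claim: each hyperedge weight is a sum of at most $p^p=O(1)$ many $\mvp$ values, each of which is computable in $\poly(n,m)$ time (a sum of $m$ products of $p$ integers); the rational coefficients $1/\beta(k,p,\mathcal X)$ are computable in $\poly(n)$ time, and one can clear denominators by multiplying all weights by the common value $\binom{k}{p}!$ or similar to get integer weights without affecting the reduction. The number of hyperedges is at most $\binom{N}{p}\le N^p=(k\cdot2^{\lceil n/k\rceil})^p$, and the claimed bound $O(\poly(n,m)\cdot(k^2\cdot2^{n/k})^p)$ absorbs the $k^p$ factor from the number of class-choices and the $\poly$ factor from each weight computation; I would just state this rather than optimize constants. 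The one genuine subtlety — and the step I expect to require the most care — is verifying that the sum of the $\binom{k}{p}$ hyperedge weights over a clique reproduces \emph{every} term of the \cref{cor:even-norm} double sum with the correct multiplicity: the outer sum there is over increasing $p$-tuples of $[k]$, which is in bijection with the hyperedges of a $k$-clique, so each outer term is counted once; the $\beta(k,p,\mathcal X)^{-1}$ normalization is precisely engineered so that terms with fewer than $p$ distinct class-indices (which would otherwise be produced by several different hyperedges) get the right total coefficient. I would double-check this bookkeeping against the statement of \cref{cor:even-norm}, but it is exactly the combinatorial identity that corollary was built to encode, so no new idea is needed beyond careful index management.
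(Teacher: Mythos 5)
Your proposal is correct and follows essentially the same route as the paper's proof: the paper likewise sets $\vec c^{(k+1)}_1=\vec t$ and assigns to the hyperedge on classes $x_1<\cdots<x_p$ exactly the inner sum of \cref{cor:even-norm}, so that summing over the $\binom{k}{p}$ hyperedges of a $k$-clique (which must take one vertex per class, by $k$-partiteness) reproduces $\|\vec c^{(1)}_{j_1}+\cdots+\vec c^{(k)}_{j_k}-\vec t\|_p^p$. Your extra remarks on padding when $k\nmid n$ and on clearing denominators to obtain integer weights are reasonable housekeeping details that the paper glosses over.
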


\begin{proof}
 Given a $(0,1)$-$\CVP_p$ instance with basis $\basis=(\vec{b}_1,\ldots \vec{b}_n)\in \Z^{m\times n} $, a target $\vec{t}\in \Z^m$ and a number $d>0$, we construct a minimum-weight-$k$-Clique instance on a $p$-uniform hypergraph $\mathcal{G}=(\mathcal{V},\mathcal{E})$ on $k\cdot 2^{n/k}$ vertices. For simplicity, let's assume that $n$ is an integer multiple of $k$.

    First, partition the basis vectors into $k$ sets: $\basis^{(1)},\ldots, \basis^{(k)}$, each consisting of $n/k$ vectors such that $\basis^{(i)}=\left\{\vec{b}_{\frac{n}{k}(i-1)+1},\ldots, \vec{b}_{i\frac{n}{k}}\right\}$. We construct $\mathbf{C}^{(i)}=[\vec{c}_{1}^{(i)}\ldots \vec{c}_{2^{n/k}}^{(i)} ]\in \Z^{m\times 2^{n/k}}$ for $i\in [k]$, where the columns of $\mathbf{C}^{(i)}$ represent $\{0,1\}$-combinations of vectors from $\mathbf{B}^{(i)}$ and each vector in $\mathbf{C}^{(i)}$ will correspond to a vertex in the reduced hypergraph.  Let
     $\mathcal{V}:=\{v_{j}^{(i)}| i\in [k], j\in [2^{n/k}]\}$ 
     be the set of vertices and 
     \[\mathcal{E}:=\{e_{i_1,j_1,\ldots, i_p,j_p}=(v_{j_1}^{(i_1)},\ldots,v_{j_p}^{(i_p)})| i_1,\ldots, i_p\in [k]^p, i_1<\ldots < i_p,  j_1,\ldots,j_{p}\in [2^{n/k}] \}\]
     be the edge set. Notice that it is a $k$-partite hypergraph. Let $\vec{c}_{1}^{(k+1)}=\vec{t}$ and for any tuple of $p$ pairs of elements $\mathcal{S}\subset (\Z, \Z)^p$, let $\sigma(\mathcal{S})$ denote the number of pairs equal to $(k+1,1)$, and $\sigma'(\mathcal{S})$ denote the number of distinct pairs other than $(k+1,1)$ in the tuple $\mathcal{S}$.  For an edge $e_{i_1,j_1,\ldots, i_p,j_p}\in \mathcal{E}$, we define the edge weight as follows: let  $\mathcal{S}=\left((i_1,j_1),\ldots,(i_p,j_p), (k+1,1)\right)$
    
       \[ w(e_{i_1,j_1,\ldots,i_p,j_p})= \sum\limits_{\mathcal{X}=((i_{l_1},j_{l_1}),\ldots,(i_{l_p},j_{l_p})) \in \mathcal{S}^p } (-1)^{\sigma(\mathcal{X})}\frac{1}{\beta(k,p,\mathcal{X})}\cdot  \textbf{mvp}\left(\vec{c}_{j_{l_1}}^{(i_{l_1})},\ldots, \vec{c}_{j_{l_p}}^{(i_{l_p})}\right),\]
   where $\beta(k,p,\mathcal{X}):=\binom{k-\sigma'(\mathcal{X})}{p-\sigma'(\mathcal{X})}$.

    It is trivial that the reduction takes $O(\poly(n,m) \cdot (k^2\cdot 2^{n/k})^p)$ time and space. In the rest of the proof, we will show that there exists a $k$-clique of weight less than equal to $d^p$ if the given $(0,1)$-$\CVP_p$ instance is a \textsf{YES} instance. Otherwise, all $k$-cliques have weight greater than $d^p$.
    
    We claim that the weight of a $k$-clique on the reduced graph is at least $\min_{\vec{z}\in \{0,1\}^n} \|\basis \vec{z}-\vec{t}\|_p^p$. Notice that, any $k$-clique in a $k$-partite hypergraph must have exactly one vertex from each partition.  Let the vertices ${v}_{j_1}^{(1)},\ldots, {v}_{j_k}^{(k)}$ form a $k$-clique, then the weight of the clique is 
    \begin{align*}
    &\sum_{\substack{\{x_1,\ldots, x_p\}\subset [k]}} w(e_{{x_1},j_{x_1}},\ldots, e_{{x_p},j_{x_p}})\\
    =& \sum_{\substack{\{x_1,\ldots, x_p\}\subset [k]}}\sum\limits_{ \substack{\mathcal{X}=(({l_1},j_{l_1}),\ldots,({l_p},j_{l_p})) \\ \in \{({x_1},j_{x_1}),\ldots,({x_p},j_{x_p})\}^p} } (-1)^{\sigma(\mathcal{X})}\cdot \frac{1}{\beta(k,p,\mathcal{X})}\cdot  \textbf{mvp}\left(\vec{c}_{j_{l_1}}^{({l_1})},\ldots, \vec{c}_{j_{l_p}}^{({l_p})}\right)\\
    =& \|\vec{c}_{j_1}^{(1)}+\cdots + \vec{c}_{j_k}^{(k)}-\vec{t}\|_p^p \geq \min_{\vec{z}\in \{0,1\}^n} \|\basis \vec{z}-\vec{t}\|_p^p.
    \end{align*}
    Here, the second equality follows from \cref{cor:even-norm} and the inequality uses the fact that $\vec{c}_{j_1}^{(1)}+\cdots + \vec{c}_{j_k}^{(k)}= \basis^{(1)}\vec{z}_{1}+\cdots +\basis^{(k)}\vec{z}_{k}$ for some $\vec{z}_{1},\cdots, \vec{z}_{k}\in \{0,1\}^{n/k}$.  Therefore, we get that if the $(0,1)$-$\CVP_p$ instance is a \textsf{NO} instance i.e. $\min_{\vec{z}\in \{0,1\}^n} \|\basis \vec{z}-\vec{t}\|_p>d$ then the reduced minimum weight-k-Clique instance is a \textsf{NO} instance i.e. every $k$-clique in the hypergraph $\mathcal{G}$ has weight more than $d^p$.

    Now, let's assume that the given $(0,1)$-$\CVP_p$ instance is a \textsf{YES} instance i.e. there exists $\vec{z}\in \{0,1\}^n$ such that $\|\basis \vec{z}-\vec{t}\|_p\leq d$. Let $\vec{c}_{j_i}^{(i)}=\vec{B}^{(i)}\left[z[{(i-1)\frac{n}{k}+1}], \ldots, z[i\frac{n}{k}]\right]^T$. Notice that $\basis \vec{z}=\sum_{i\in [k]}\vec{c}_{j_i}^{(i)}$. We claim that the $k$-clique formed by the vertices ${v}_{j_1}^{(1)},\ldots, v_{j_k}^{(k)}$ has weight less than $d^p$. The weight of the $k$-clique is 
     \begin{align*}
    \sum_{\substack{\{i_1,\ldots, i_p\}\subset [k]}} &w(e_{{i_1},j_{i_1}},\ldots, e_{{i_p},j_{i_p}})\\
     =& \sum_{\substack{\{i_1,\ldots, i_p\}\subset [k]}}\sum\limits_{ \substack{\mathcal{X}=(({l_1},j_{l_1}),\ldots,({l_p},j_{l_p})) \\ \in \{(i_1,j_{i_1}),\ldots,(i_p,j_{i_p})\}^p}}  (-1)^{\sigma(\mathcal{X})}\cdot \frac{1}{\beta(k,p,\mathcal{X})}\cdot  \textbf{mvp}\left(\vec{c}_{j_{l_1}}^{({l_1})},\ldots, \vec{c}_{j_{l_p}}^{(l_p)}\right)\\
    =& \|\vec{c}_{j_1}^{(1)}+\cdots+ \vec{c}_{j_k}^{(k)}-\vec{t}\|_p^p = \|\basis \vec{z}-\vec{t}\|_p^p\leq d^p.
    \end{align*}
    Here, the second equality follows from \cref{cor:even-norm} and  the inequality uses the condition $\basis \vec{z}=\sum_{i\in [k]}\vec{c}_{j_i}^{(i)}$. This completes the proof.   
\end{proof}

\bibliographystyle{alpha}
\bibliography{ref}
\appendix
\iffalse
\section{Proof of \cref{cor:euclidean-norm}}\label{appendix:cor-euclidean-norm}
\CorEuclideanNorm*
\begin{proof}
	From $\cref{lem:mvp}$, we have
	\begin{align*} \|\vec{v}_1+\cdots+\vec{v}_k-\vec{t}\|^2&=\langle\vec{t},\vec{t} \rangle - \sum_{i=1}^k \langle\vec{v}_i,\vec{t} \rangle- \sum_{i=1}^k \langle\vec{t}, \vec{v}_i \rangle+ \sum_{(i,j) \in [k]^2} \langle \vec{v}_i,\vec{v}_j \rangle \\ 
		&=\|\vec{t}\|^2  - \sum_{i=1}^k 2\cdot \langle\vec{v}_i,\vec{t} \rangle+ 2 \sum_{\substack{(i,j)\in [k]^2 \\\text{and } i<j}} \langle \vec{v}_i,\vec{v}_j \rangle+\sum_{i=1}^k \|\vec{v}_i\|^2\\
		&=\|\vec{t}\|^2  + \sum_{i=1}^k \left(\|\vec{v}_i\|^2-2\cdot \langle\vec{v}_i,\vec{t} \rangle\right)+ 2 \sum_{\substack{(i,j)\in [k]^2 \\\text{and } i<j}} \langle \vec{v}_i,\vec{v}_j \rangle. 
	\end{align*}  
\end{proof}
\fi

\section{Proof of \cref{cor:even-norm}}\label{appendix:cor-even-norm}
\CorEvenNorm*
\begin{proof}

	From $\cref{lem:mvp}$, we have
	\begin{align*} &\|\vec{v}_1+\cdots+\vec{v}_k-\vec{v}_{k+1}\|_p^p=\sum_{\mathcal{X}=(i_1,\ldots,i_p)\in [k+1]^p} (-1)^{\sigma(\mathcal{X})}\mvp(\vec{v}_{i_1},\ldots, \vec{v}_{i_p}) \\ 
		=&\sum_{\substack{\mathcal{X}=(i_1,\ldots,i_p)\in [k+1]^p\\ \text{ and } \sigma'(\mathcal{X})=p}} (-1)^{\sigma(\mathcal{X})} \mvp(\vec{v}_{i_1},\ldots, \vec{v}_{i_p})+ \sum_{\substack{\mathcal{X}=(i_1,\ldots,i_p)\in [k+1]^p\\ \text{ and } \sigma'(\mathcal{X})=p-1}}
		(-1)^{\sigma(\mathcal{X})} \mvp(\vec{v}_{i_1},\ldots, \vec{v}_{i_p})+\\
		 &\cdots+ \sum_{\substack{\mathcal{X}=(i_1,\ldots,i_p)\in [k+1]^p\\ \text{ and } \sigma'(\mathcal{X})=0}}
		(-1)^{\sigma(\mathcal{X})} \mvp(\vec{v}_{i_1},\ldots, \vec{v}_{i_p}) \\
=&\sum_{\substack{\mathcal{Y}=\{i_1,\ldots,i_p\}\subset [k]}} \sum_{\substack{\mathcal{X}=(i_{l_1},\ldots,i_{l_p}) \\\in \{\mathcal{Y}\cup (k+1)\}^p\text{ and } \sigma'(\mathcal{X})=p}} (-1)^{\sigma(\mathcal{X})} \mvp(\vec{v}_{i_1},\ldots, \vec{v}_{i_p})+ \\ &\sum_{\substack{\mathcal{X}=(i_1,\ldots,i_p)\in [k+1]^p\\ \text{ and } \sigma'(\mathcal{X})=p-1}}
	(-1)^{\sigma(\mathcal{X})}	\mvp(\vec{v}_{i_1},\ldots, \vec{v}_{i_p})+\cdots+ \sum_{\substack{\mathcal{X}=(i_1,\ldots,i_p)\in [k+1]^p\\ \text{ and } \sigma'(\mathcal{X})=0}}
		(-1)^{\sigma(\mathcal{X})} \mvp(\vec{v}_{i_1},\ldots, \vec{v}_{i_p}) \\
		=& \sum_{\substack{\mathcal{Y}=\{i_1,\ldots,i_p\}\subset [k] }}\Biggl( \sum_{\substack{\mathcal{X}=(i_{l_1},\ldots,i_{l_p}) \in \{\mathcal{Y}\cup (k+1)\}^p\\ \text{ and } \sigma'(\mathcal{X})=p}}(-1)^{\sigma(\mathcal{X})}\mvp(\vec{v}_{i_1},\ldots, \vec{v}_{i_p})+ \\		
  & \sum_{\substack{\mathcal{X}=(i_{l_1},\ldots,i_{l_p}) \in \{\mathcal{Y}\cup (k+1)\}^p\\ \sigma'(\mathcal{X})=p-1}} (-1)^{\sigma(\mathcal{X})} \frac{1}{k-(p-1)}  \mvp(\vec{v}_{i_1},\ldots, \vec{v}_{i_p})\Biggl) 
		+\cdots + \\ &\sum_{\substack{\mathcal{X}=(i_1,\ldots,i_p)\in [k+1]^p\\ \text{ and } \sigma'(\mathcal{X})=0}}
		(-1)^{\sigma(\mathcal{X})} \mvp(\vec{v}_{i_1},\ldots, \vec{v}_{i_p}) \\
=&\sum_{\substack{\{i_1,\ldots,i_p\}\subset [k]  }} \left(\sum\limits_{ \substack{\mathcal{X}=(l_1,\ldots,l_p) \\ \in \{x_1,\ldots,x_p, k+1\}^p} } (-1)^{\sigma(\mathcal{X})}\cdot \binom{k-\sigma'(\mathcal{X})}{p-\sigma'(\mathcal{X})}^{-1}\cdot  \textbf{mvp}\left(\vec{v}_{{l_1}},\ldots, \vec{v}_{{l_p}}\right)\right)\\
 =&\sum_{\substack{\{i_1,\ldots,i_p\}\subset [k]  }} \left(\sum\limits_{ \substack{\mathcal{X}=(l_1,\ldots,l_p) \\ \in \{i_1,\ldots,i_p, k+1\}^p} } (-1)^{\sigma(\mathcal{X})}\cdot \frac{1}{\beta(k,p,\mathcal{X})}
\cdot  \textbf{mvp}\left(\vec{v}_{{l_1}},\ldots, \vec{v}_{{l_p}}\right)\right). 
\end{align*}  
\end{proof}

\end{document}